\newtheorem{Theorem}{Theorem}[section]
\newcommand{\trace}{\mathop{\mathrm{trace}}}
\newcommand{\beq}{\begin{equation}}
\newcommand{\eeq}{\end{equation}}
\newcommand{\bq}[1]{\begin{equation} \label{#1}}
\newcommand{\eq}{\end{equation}}
\newcommand{\bed}{\begin{displaymath}}
\newcommand{\eed}{\end{displaymath}}
\newcommand{\bea}{\bed\begin{array}{rl}}
\newcommand{\eea}{\end{array}\eed}
\newcommand{\ad}{&\!\!\!\disp}
\newcommand{\bE}{{\mathbf{E}}}
\newcommand{\tr}{{\hbox{tr}}}
\def\disp{\displaystyle}
\newcommand{\e}{\varepsilon}
\newcommand{\barray}{\begin{array}{ll}}
\newcommand{\earray}{\end{array}}
\newcommand{\M}{{\cal M}}
\newcommand{\cd}{(\cdot)}
\newcommand{\rr}{{\mathbb R}}
\renewcommand{\hat}{\widehat}
\renewcommand{\tilde}{\widetilde}
\renewcommand{\bar}{\overline}
\newcommand{\qed} {{$\hfill\blacksquare$}}
\newcommand{\graph}{G}      
\newcommand{\pdup}{r}       
\newcommand{\pdel}{q}       
\newcommand{\p}{p}          
\newcommand{\E}{\mathbf{E}}
\newcommand{\esa}{\varepsilon} 
\newcommand{\emc}{\rho}   
\newcommand{\bg}{\bar{g}}   
\newcommand{\obs}{y}        
\newcommand{\noise}{e}      
\newcommand{\g}{g}          
\newcommand{\hg}{\hat{\g}}  
\newcommand{\degree}{f}     
\newcommand{\transition}{L} 
\newcommand{\A}{A}          
\newcommand{\tim}{n}        
\newcommand{\ct}{t}         
\newcommand{\baf}{\bar{f}}  
\newcommand{\mc}{\theta}    
\newcommand{\ttrue}{B}      
\newcommand{\diag}{\mathop{\mathrm{diag}}}
\newcommand{\tg}{\tilde{\g}}
\newcommand{\Et}{\E_\tim}   
\newcommand{\ser}{\nu}      
\newcommand{\cov}{\Sigma}   
\newcommand{\s}{{N_0}}          
\newcommand{\delay}{\lambda}
\newcommand{\de}{\bar{d}_1}
\newcommand{\dd}{\bar{d}_2}
\newcommand{\noisee}{\omega}
\newtheorem{Corollary}{Corollary}[section]
\newtheorem{Lemma}{Lemma}[section]
\newtheorem{Definition}{Definition}[section]
\title{\LARGE \bf Tracking the Empirical
Distribution of a Markov-modulated Duplication-Deletion Random Graph}
\author{Maziyar Hamdi, Vikram Krishnamurthy,\thanks{Maziyar Hamdi and Vikram Krishnamurthy are with the Department of Electrical and Computer
Engineering, University of British Columbia, Vancouver, Canada. Email:
{\small \{maziyarh,vikramk\}@ece.ubc.ca.}} and George Yin \thanks{George
Yin is with the Department of Mathematics,
Wayne State University, Detroit, US. Email: {\small gyin@math.wayne.edu.}}\thanks{Parts of this work was presented at the 2012 IEEE International Conference on Acoustics, Speech and Signal Processing (ICASSP2012), Kyoto, Japan.}
}
\begin{document}

\maketitle
\thispagestyle{empty}
\pagestyle{empty}

\begin{abstract}

\hspace{1mm}
This paper considers a Markov-modulated duplication-deletion random graph where at each time instant, one node can either join or leave the network; the probabilities of joining or leaving evolve according to the realization of a finite state Markov chain. The paper comprises of 2 results.  First, motivated by social network applications, we analyze the asymptotic behavior of the degree distribution of the Markov-modulated random graph. Using the asymptotic degree distribution,   an expression is obtained for the delay in searching such graphs. Second, a stochastic approximation algorithm is presented to track empirical degree distribution as it evolves over time. The tracking performance of the algorithm is analyzed in terms of mean square error and a functional central limit theorem is presented for the asymptotic tracking error.

\end{abstract}

\begin{keywords}
Complex networks, empirical degree distribution, giant component, Markov-modulated random graphs, power law, searchability, stochastic
approximation.
\end{keywords}
\section{INTRODUCTION}\label{sec:intro}

Dynamic random graphs have been widely used to model social networks, biological networks \cite{duplication} and Internet graphs \cite{complex}. Motivated by analyzing social networks, this paper considers  Markov-modulated dynamic random graphs of the duplication-deletion type which we now describe:

Let $\tim=0,1,2,\ldots$ denote discrete time. Let $\mc$ denote a discrete time Markov chain with state space $\{1,2,\ldots,M\}$, evolving according to the $M \times M$ transition probability matrix $\A^{\emc}$ and initial probability distribution $\pi_0$. A Markov-modulated duplication-deletion random graph is parameterized by the 7-tuple $(M,\A^{\emc},\pi_0,\pdup,\p,\pdel,\graph_0)$. Here  $\p $  and $\pdel$ are $M$-dimensional vectors with elements $\p(i)$ and $\pdel(i)  \in [0,1]$, $i=1,\ldots,M$. $\p(i)$ denote the connection probabilities and $\pdel(i)$ denote the deletion probabilities. Also, $\pdup \in [0,1]$ denotes the probability of duplication step and $\graph_0$ denotes the  initial graph at time~$0$. $\graph_0$ can be any finite simple connected graph. For simplicity we assume that $\graph_0$  is a simple connected graph with size $N_0$. The duplication-deletion random graph is constructed as follows:

\begin{algorithm}{}
 At time $\tim$, given the graph $\graph_\tim$ and Markov chain state
$\mc_\tim$,  simulate the following events: \\
\textbf{Step 1: Duplication step}: With probability
$\pdup$ implement the following steps:
\begin{itemize}
\item Choose node $u$ from graph $\graph_\tim$ randomly with uniform
distribution.
\item  \textit{Vertex-duplication}: Generate a new node $v$.
\item \textit{Edge-duplication}:
\begin{itemize}
\item Connect node $u$  to node $v$. (A new edge between $u$ and $v$ is
added to the graph.)
\item Connect each neighbor of node $u$  with probability
$\p(\mc_\tim)$  to  node $v$. These connection events are statistically
independent.
\end{itemize}
\end{itemize}

\textbf{Step 2: Deletion Step}: With probability
$\pdel(\mc_\tim)$ implement the following step: \begin{itemize}
 \item \textit{Edge-deletion}: Choose node $w$ randomly from
$\graph_\tim$ with uniform distribution. Delete node $w$ and all edges
 connected to node $w$ in graph $\graph_\tim$.
 \item \textit{Duplication Step}:  Implement Step 1.

 \end{itemize}
 \textbf{Step 3}: Denote the resulting graph as
$\graph_{\tim+1}$.\\Generate Markov state $\mc_{\tim+1}$ using transition
matrix $\A^{\emc}$.\\
 \textbf{Step 4: Network Manager's Diagnostics}: The network manager
computes the estimates of the expected degree distribution.
 Denote the resulting graph as $\graph_{\tim+1}$. \\
 Set $\tim \rightarrow \tim+1$ and go to Step 1. \caption{Markov-modulated Duplication-deletion Graph parameterized by
$(M,\A^{\emc},\pi_0,\pdup,\p,\pdel,\graph_0)$}
 \label{alg0}
\end{algorithm}
For convenience in our analysis, assume that a  node generated in the duplication step cannot be eliminated in the deletion step immediately after its generation. Also to prevent the isolated nodes, assume that the neighbor of a node with degree one cannot be eliminated in the deletion step. The duplication step (Step 2) is purely for convenience - it ensures that the graph size does not decrease. The Markov-modulated random graph generated by Algorithm~\ref{alg0} mimics social networks where the interaction between nodes  evolves over time due to underlying dynamics such as seasonal variations (e.g., the high school friendship social network  evolving over time with different winter/summer dynamics). In such cases, the connection/deletion probabilities $\p,\pdel$ evolve with time. Algorithm~\ref{alg0} models these time variations as a finite state Markov chain $\mc_\tim$ with transition matrix $\A^{\emc}$.
\subsection*{Context: Why is the degree distribution important?}\label{subsec:con}

The expected degree distribution yields useful information about the connectivity of the random graph. For example, if a majority of nodes in the random graph have relatively high degrees, the graph is highly connected and a message can be transferred between two arbitrary nodes with shorter paths. However, if a majority of nodes have smaller degrees then for transmitting a message throughout the network, longer paths are needed, see \cite{social}. Also, the degree distribution can be used to determine the existence of ``giant component"\footnote{A giant component is a connected component with size $O(n)$ where $n$ is the total number of vertices in the graph.}. The existence of a giant component has important implications in social networks in terms of modeling information propagation in a social network and in human disease modeling \cite{Newman1,Newman2,Nature}. If the average degree of a random graph is strictly greater than one then with probability one there exists a unique giant component \cite{complex} and the size of this component can be computed from the expected degree sequence. The average degree and the size of giant component is computed at each time as a measure of connectivity by the monitoring node. Another application of tracking the expected degree distribution is to estimate adaptively the ``\textit{searchability}'' of the network. The searchability of a  social network \cite{CSN} is  the average number of nodes that need to be accessed  to reach another node. In this paper, we track the searchability of the network by means of tracking the expected degree distribution at each time.

\subsection*{Main Results and Paper Organization:}
{\em Notation:} At each time $\tim$, let $N_\tim$ denote the number of nodes of graph $\graph_\tim$. Also, let $f_\tim(i)$ denote the number of vertices of graph $\graph_\tim$ with degree $i$. Clearly $\sum_{i\geq 1} f_\tim(i) = N_\tim$. Define the ``empirical vertex degree distribution'' as
\beq \label{eq:sample}
g_\tim(i) = \frac{f_\tim(i)}{N_\tim}, \quad \text{for } 1\leq i\leq N_\tim.\eeq
 Note that $g_\tim(i)$  can be viewed as a probability mass function since $g_\tim(i) \geq 0$ and $\sum_i g_\tim(i) = 1$.
 Let  $\bg_\tim = \E\{g_\tim\}$ denoted the ``expected vertex degree distribution'' where $\g_\tim$ is the empirical degree distribution defined in (\ref{eq:sample}).

 Given the above Markov-modulated random graph, this paper presents three main results.
 \\
 \textbf{Result 1:}\textit{ Asymptotic Degree Distribution Analysis of fixed size Markov-modulated duplication-deletion random graph}


%

 Consider the sequence of  finite duplication-deletion random graphs $\{\graph_\tim\}$, generated by Algorithm \ref{alg0} with $\pdup = 0$. Clearly the number of vertices in the graph generated by Algorithm~\ref{alg0} with $\pdup = 0$  satisfies $N_\tim  = N_0$ for $ n = 1,2,\ldots$ (The size of random graph is fixed.).  Assume that the Markov chain  $\mc_\tim$ evolves according to a slow transition matrix $\A^{\emc} = I + \emc Q$, where $Q$ is a generator matrix and $\emc$ is a small positive constant. A novel degree distribution analysis is provided for the fixed size Markov-modulated duplication-deletion random graph in Sec.\ref{sec:markov}. Theorem~\ref{theo:mu} shows  that for each $\mc_\tim = \mc$, the expected degree distribution of the finite random, $\bg(\mc)$, can be computed from (\ref{eq:gbar}). 

The asymptotic degree distribution analysis allows us to investigate  the searchability and connectivity of the random graph generated according to Algorithm~\ref{alg0} as described in Sec.\ref{sec:markov}. Also, using the asymptotic degree distribution, the existence and size of the giant component in the random graph  can be explored.\\\\
\textbf{Result 2:}\textit{ Tracking the Empirical Degree Distribution}

In Sec.\ref{sec:pmf}, we address the following two questions: \begin{itemize}
\item \textit{How can a network manager estimate (track) the empirical degree distribution using a stochastic approximation algorithm without knowledge of Markovian dynamics?}
\item \textit{How good is the estimate $\hg_\tim$ generated by the stochastic approximation algorithm (\ref{eq:sa}) when the random graph evolves according to Algorithm~\ref{alg0}?}
\end{itemize}
 In Sec.\ref{sec:pmf}, we propose a  stochastic approximation algorithm to estimate the degree of each node in random graph which can be modeled by Algorithm~\ref{alg0}. Consider the finite Markov-modulated duplication-deletion random graph generated by Algorithm \ref{alg0} with 7-tuple $(M, A^\emc,\pi_0,\p,\pdel,\pdup,\graph_0)$ where $\pdup = 0$. Suppose at each time $\tim$, noisy measurements, $\obs_\tim$ the empirical distribution of $\g_\tim$ are obtained by the administrator of the social network. The network manager does not have information about the Markovian dynamics and deploys a non-parametric stochastic approximation algorithm to estimate the expected vertex degree distribution. More precisely, given these measurements $\obs_\tim$, $\tim=1,2,\ldots$, the network administrator aims to estimate the time varying expected vertex distribution $\bg(\mc_\tim)$. It deploys the following constant step size stochastic approximation algorithm:
\beq\label{eq:sa}
  \hg_{\tim+1} = \hg_\tim +  \esa \left [ \obs_{\tim+1} - \hg _\tim
\right ] \eeq
Here $\esa > 0$ denotes a small positive step size. Eq. (\ref{eq:sa}) is merely an exponentially discounted empirical distribution of the noisy node degree.  Let $\tg_\tim = \hg_\tim - \E\{\bg(\mc_\tim)\}$ denote the tracking error of the estimate of the empirical distribution of node degree. We present three results regarding the tracking performance of the degree distribution of the random graph: \begin{itemize}
 \item {\em 2-a. Mean square error analysis:} Theorem~\ref{theo2} in Sec.\ref{subsec:bound} shows that the mean squared of tracking error (the distance between $\E\{\bg(\mc_\tim)\}$ and the estimated probability mass function (PMF)  $\hg_\tim$) is of order of $O\left(\esa + \emc + \frac{\emc^2}{\esa}\right)$. (Recall $\esa$ is the step size of the stochastic approximation algorithm and $\emc$ parameterizes the speed of the underlying un-observed Markovian dynamics). Derivation of this result uses error bounds on  two-time scale Markov chains and perturbed Liapunov function methods.
 \item {\em 2-b. Weak convergence analysis:} Theorem~\ref{theo3} in Sec.\ref{subsec:ode} shows that the asymptotic behavior of the stochastic approximation algorithm (\ref{eq:sa}) converges weakly  to the solution of a switched Markovian ordinary differential equation
\begin{equation}
 \frac{d\hg(\ct)}{d\ct} = -\hg(\ct) + \bg(\mc(\ct)), \quad\hg(0) = \hg_0.
\end{equation}
\item {\em 2-c. Functional central limit theorem for scaled tracking error:} How can the tracking error in the empirical distribution estimate be quantified? Sec.\ref{subsec:er} investigates the asymptotic behavior of the scaled tracking error. Similar to \cite{KTY09}, it is shown that the interpolated scaled tracking error (between the expected and the estimated PMF) converges weakly to the solution of a switching diffusion. Let $\ser_k = \frac{\hg_k - \mathbf{E}\{\bg(\mc_k)\}}{\sqrt{\esa}}$ denote the scaled tracking error.
 Theorem~\ref{theo4} in Sec.\ref{subsec:er} proves that under reasonable conditions, the interpolated sequence of iterates, $\ser^\esa(\ct) = \ser_k$ for $k \in [k\esa, (k+1)\esa)$ converges weakly to the solution of the following Markovian switched diffusion process
\begin{equation}\label{eq:sig}
d\ser(\ct) = -\ser(\ct) d\ct + \left(\cov^{\frac{1}{2}}(\mc(\ct))\right)d\omega,
\end{equation}
where $\omega(\cdot)$ is  an $\rr^{N_0}$-dimensional standard Brownian motion
and  $\cov(\mc) \in \rr^{N_0\times N_0}$ is the covariance matrix. Eq. (\ref{eq:sig}) (and Theorem~\ref{theo4}) are functional central limit theorems. The dynamics of the error in (\ref{eq:sig}) follow a Markov-modulated diffusion process. The covariance $\cov(\mc(\ct))$ for large $\ct$ is used as a measure for the asymptotic convergence rate of the tracking algorithm.
\end{itemize}

Note that the Markovian assumption only appear in our analysis, the stochastic approzimation algorithm (\ref{eq:sa}) does not assume knowledge of the underlying Markov chain. (In \cite{BMP, protter} this analysis falls under the class of analysis of a stochastic approximation algorithm with a Markovian hyperparameter.)

\textbf{Result 3:\textit{ Power law component for infinite duplication-deletion random graph without Markovian dynamics}}

Sec.\ref{subsec:pl} extends the results of Sec.\ref{sec:markov} and investigates the dynamics of the graph generated according to Algorithm \ref{alg0} with $\pdup = 1$ and when there are no Markovian dynamics, that is, $M = 1$. Since $\pdup = 1$ for $\tim \geq 0$, $\graph_{\tim+1}$ has one more vertex compared to $\graph_\tim$. In particular, since $\graph_0$ is an empty set, $\graph_\tim$ has $\tim$ nodes, that is, $N_\tim = \tim$. Theorem~\ref{theo:pl} proves that the expected node degree distribution $\bg_\tim$ satisfies a power law as $n \rightarrow \infty$. That is, $$\log \bg_\tim(i)  = \alpha - \beta \log i \quad  \text{ as } \tim\rightarrow \infty$$ where $\alpha$ and $\beta$ are non-negative real numbers.  The power law component, $\beta$, satisfies
    \begin{align}
        (1+\pdel)( \p^{\beta - 1 }+\p\beta - \p ) =1 + \beta \pdel.
    \end{align}
where $\p$ and $\pdel$ are the probabilities defined in Algorithm~\ref{alg0}.
 The above result slightly extends~\cite{duplication,bebek25} where only a duplication model is considered. Theorem~\ref{theo:pl} parametrizes the degree distribution of the infinite duplication-deletion random graph without Markovian dynamics generated by Algorithm~\ref{alg0} by the power law component. Theorem~\ref{theo:pl} allows us to explore the searchability of the network and also the existence and size of the giant component of the infinite duplication-deletion random graph without Markovian dynamics.
 \subsection*{Related Works:} We refer to \cite{KY03,YKI04} for a
comprehensive development of stochastic approximation algorithms. Here,
the related literature on dynamic social networks is reviewed briefly.
The evolution of random graphs is investigated in several
papers,\cite{erdos2,evolution-nature}. The book \cite{durrett} provides a
detailed expositions of random graphs. The model of Pastor-Satorras et
al.\cite{bebek25} makes the basis for the model which is studied and
generalized in this paper. In this model, at each time step, a new node
joins the network. In the literature, it has been shown that the degree
distribution of such network satisfies \textit{power
law}\cite{bebek19,bebek30}. In random graphs which satisfy the power law,
the number of nodes with an specific degree depends on a parameter called
power law component. A general complex graph generated by any arbitrary
pure duplication, may not satisfy the power law. The power law
distribution is satisfied in many other networks such as WWW-graphs,
peer-to-peer networks, phone call graphs and various massive social
networks (e.g. Yahoo, MSN,
Facebook)\cite{bebek1,bebek5,bebek11,bebek13,bebek17,bebek21,chung116}.
The power law component describes asymptotic behavior of an online social
network e.g. maximum degree, existence of giant component, diameter of
the graph, and etc. \cite{bebek} provides condition on the evolution of
the graph to satisfy power law and shows that  as a result of having an
edge between nodes $u$ and $v$, the resulting graph satisfies power
law.

\section{Asymptotic Degree Distribution Analysis of the Fixed Size Markov-modulated Random Graph}\label{sec:markov}

This section presents degree distribution analysis of the \textit{fixed size Markov-modulated duplication-deletion random graph}.  
Consider the fixed size Markov-modulated duplication-deletion random graph generated according to
Algorithm~\ref{alg0} with 7-tuple $(M, A^\emc,\pi_0,\p,\pdel,\pdup,
\graph_0)$ where $\pdup =0$. The number of vertices in the graph generated by Algorithm
\ref{alg0} with $\pdup = 0$ is always $N_0$ and the size of the graphs is fixed. Recall from Sec.\ref{sec:intro}, the state space of $\{\mc_\tim\}$ is denoted as
\begin{equation}
\mathcal{M} = \{1,2,...,M\},
\end{equation}
and the transition probability matrix of
$\mc_\tim$ is
\begin{equation}
\label{eq:A}
A^{\emc} = I + \emc Q.
\end{equation}
Here $\emc$ is a small positive real number and so $\mc_\tim$ is a
``slow" Markov chain. $I$ is an $M\times M$ identity matrix, and $Q$ is
an irreducible generator of a continues-time Markov chain. Let
$q_{ij}$ denote the elements of the generator matrix $Q$ such that
\begin{itemize}\item{(A)} $q_{ij} \geq 0 \hspace{2mm}{\rm if} \hspace{2mm}
i\neq j$ and  $\forall i,\hspace{2mm} \sum_{j=1}^{M}q_{ij} = 0.$ For simplicity, we assume that the initial distribution  $\pi_0$ is independent of $\emc$. $Q$ is irreducible\footnote{The assumption of irreducibility implies that there exists a unique
stationary distribution for this Markov chain, $\pi \in \mathbb{R}^{M
\times 1}$ such that \begin{equation}\label{pi}
\pi' = \pi' \A^{\emc}.
\end{equation}}.
\end{itemize}

Theorem~\ref{theo:mu} below proves that the expected degree distribution of the fixed size markov-modulated duplication-deletion random graph satisfies a recursive equation from which the expected degree distribution can be found.
\begin{Theorem}\label{theo:mu}
Consider the  fixed size Markov-modulated duplication-deletion random graph generated according to Algorithm~\ref{alg0} with 7-tuple $(M, A^\emc,\pi_0,\p,\pdel,\pdup,
\graph_0)$  where $A^\emc = I
+ \emc Q$ and $\pdup =0$. Let $\bg^{\mc}_\tim =
\E\{\g_\tim|\mc_\tim = \mc\}$. The expected degree distribution of nodes
in the fixed size Markov-modulated duplication-deletion random graph, $\bg^{\mc}_\tim$, satisfies the
following recursion
\beq
\label{eq:true1}
\bg^{\mc}_{\tim+1} = (I + \frac{1}{N_0}\transition'(\mc))\bg^{\mc}_{\tim},
\eeq
where  $^\prime$ denotes transpose of a matrix and $\transition(\mc_{\tim})$, with elements defined in (\ref{eq:L}),
is a generator matrix (that is, each row adds to zero and each diagonal element of $\transition(\mc_\tim)$  is negative):
\beq \label{eq:L}
      l_{ji} = \left\{\begin{array}{ll}
       0 & j < i-1 \\
      \pdel(\mc_\tim) \p(\mc_\tim)^{i-1} + \pdel(\mc_\tim) \big(1 + \p(\mc_\tim)(i -
1)\big) &j = i - 1 \\
       i \pdel(\mc_\tim)\p(\mc_\tim)^{i-1} (1 - \p(\mc_\tim)) -\pdel(\mc_{\tim})\big(i + 2 +
\p(\mc_\tim)i \big)  &j = i  \\
      \pdel(\mc_\tim) {{i+1}\choose{i-1}}\p(\mc_{\tim})^{i-1}(1-\p(\mc_{\tim}))^{2} +
\pdel(\mc_\tim)(i+1) & j = i + 1 \\
     \pdel(\mc_\tim)  {{j}\choose{i-1}}\p(\mc_{\tim})^{i-1}(1-\p(\mc_{\tim}))^{j-i+1} &
j > i + 1
          \end{array}\right.\quad \text{for $1\leq i,j\leq N_0$}
       \eeq
\qed

\end{Theorem}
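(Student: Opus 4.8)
The plan is to condition on the current graph $\graph_\tim$ and the modulating state $\mc_\tim=\mc$ (held frozen; the recursion (\ref{eq:true1}) describes the frozen-parameter dynamics that the later two-time-scale arguments build on), to compute the one-step expected increment $\E\{f_{\tim+1}(i)-f_\tim(i)\mid \graph_\tim,\mc\}$ for every degree $i$, and to recognise it as the linear functional $\tfrac1{N_0}\sum_j l_{ji}\,f_\tim(j)$ of the degree counts, with $l_{ji}$ exactly the entries (\ref{eq:L}). A further expectation over $\graph_\tim$ then gives $\E\{f_{\tim+1}\}=(I+\tfrac1{N_0}\transition'(\mc))\,\E\{f_\tim\}$, and dividing by $N_0$ yields (\ref{eq:true1}). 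Since $\pdup=0$, Step~1 is never invoked on its own, so in a single time step either nothing changes (probability $1-\pdel(\mc)$) or one ``delete-then-duplicate'' operation is applied (probability $\pdel(\mc)$); this is why every entry of $\transition(\mc)$ carries the common factor $\pdel(\mc)$, and it reduces the work to analysing one such operation.

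For the deletion half I would use the handshake identity: for $w$ drawn uniformly over the $N_0$ vertices, $\Pr[\deg w=k\mid\graph_\tim]=f_\tim(k)/N_0$ and, since $\sum_{w}\#\{z\sim w:\deg z=k\}=k\,f_\tim(k)$, the expected number of neighbours of $w$ having degree $k$ equals $k\,f_\tim(k)/N_0$. Deleting $w$ together with its incident edges removes one vertex from class $\deg w$ and moves each degree-$k$ neighbour of $w$ from class $k$ to class $k-1$; in expectation this contributes the share $-(1+i)\,\pdel(\mc)\,f_\tim(i)/N_0$ to $f(i)$ together with $+(i+1)\,\pdel(\mc)\,f_\tim(i+1)/N_0$ from the degree-$(i+1)$ neighbours that drop to degree $i$ (the latter being the $\pdel(\mc)(i+1)$ term in the $j=i+1$ line of (\ref{eq:L})).

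For the duplication half, let $u$ be the uniformly chosen vertex. Its degree increases by one; each neighbour of $u$ independently moves $k\to k+1$ with probability $\p(\mc)$, so in expectation $\p(\mc)\,k\,f_\tim(k)/N_0$ degree-$k$ vertices are promoted; and the new vertex $v$ receives its edge to $u$ plus an independent $\mathrm{Binomial}(\deg u,\p(\mc))$ set of inherited edges, so conditional on $\deg u=j$ it lands in degree class $i$ with probability $\binom{j}{i-1}\p(\mc)^{i-1}(1-\p(\mc))^{j-i+1}$. This single term is the source of the $\p(\mc)^{i-1}$ factors and the binomial coefficients in (\ref{eq:L}); it is in particular the only channel by which a vertex of degree $j>i+1$ can affect $f(i)$ in one step, which is why the last line of (\ref{eq:L}) contains nothing else, and it forces $l_{ji}=0$ for $j<i-1$, since then $\deg v\le j+1<i$ while every other move changes a degree by at most one.

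Assembling the two halves, I would group every contribution to $\E\{f_{\tim+1}(i)-f_\tim(i)\mid\graph_\tim,\mc\}$ according to the degree $j$ of the ``source'' vertex — the deleted $w$, a neighbour of $w$, the duplicated $u$, a neighbour of $u$, or the $u$ whose copy is $v$ — and read off the coefficient of $f_\tim(j)/N_0$ in the five regimes $j<i-1$, $j=i-1$, $j=i$, $j=i+1$, $j>i+1$; matching these against (\ref{eq:L}) is then a finite bookkeeping check (for instance the diagonal $-\pdel(\mc)(i+2+\p(\mc)i)$ packs ``$w$ removed'' and ``$u$ promoted'' into the constant $2$ and the two neighbour effects into $i$ and $\p(\mc)i$). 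That $\transition(\mc)$ is a generator then needs no separate argument: the row sums vanish because the operation conserves the vertex count (the added $v$ offsets the removed $w$ while the remaining moves are mass-preserving), equivalently because the binomial weights $\binom{j}{i-1}\p^{i-1}(1-\p)^{j-i+1}$ sum to one over $i$, and each diagonal entry is negative by inspection. The main obstacle is controlling the ``cross'' events that the clean recursion implicitly discards: $u$ coinciding with a former neighbour of $w$, the duplication acting on the $N_0-1$ post-deletion vertices rather than on $N_0$, and the convention of Sec.~\ref{sec:intro} forbidding removal of a vertex whose deletion would isolate a degree-one neighbour. I would argue that these perturb the increment only at order $1/N_0^2$, or are exactly removed by the stated modelling conventions, so that they leave the $O(1/N_0)$ recursion of the theorem intact.
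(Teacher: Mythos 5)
Your proposal follows essentially the same route as the paper's proof in Appendix A: condition on $\graph_\tim$ and $\mc_\tim=\mc$, enumerate the deletion/duplication events contributing to each degree class (uniform choice giving $f_\tim(j)/N_0$, neighbour counts via the handshake identity, the new vertex's degree being $1+\mathrm{Binomial}(\deg u,\p)$, hence the $\binom{j}{i-1}\p^{i-1}(1-\p)^{j-i+1}$ terms), discard the $O(1/N_0^2)$ cross terms, and verify the generator property by the binomial-sum identity. The bookkeeping organisation by ``source vertex'' is a cosmetic variant of the paper's event list, so no substantive difference to report.
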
\vspace{3mm}
The proof is presented in Appendix \ref{ap:mu}.

Theorem~\ref{theo:mu} shows that the evolution of the expected degree
distribution in a fixed size Markov-modulated duplication-deletion random graph satisfies
(\ref{eq:true1}). Eq. (\ref{eq:true1}) can be re-written as \beq \bg_{\tim+1}^\mc
= \ttrue^{\prime}(\mc)\bg_{\tim}^\mc, \eeq where  $\label{eq:B}
\ttrue(\mc_\tim) = I + \frac{1}{N_0} \transition(\mc_{\tim}).$
Since $\transition(\mc_\tim)$ is a generator,
for sufficiently large $N_0$,
 $\ttrue(\mc_\tim)$ can
be considered as the transition matrix of a Markov chain.
Hence, for each state of the Markov chain $\mc_\tim = \mc \in \{1,2,\ldots,M\}$,
there exists a unique stationary distribution $\bg(\mc)$ such that
\beq \label{eq:gbar} \bg(\mc) = \ttrue^{\prime}(\mc) \bg(\mc).\eeq
Therefore from (\ref{eq:gbar}), the expected degree distribution of the fixed size Markov-modulated duplication-deletion random graph can be computed for each state of the underlying Markov chain $\mc_\tim = \mc$. Note that the underlying markov chain $\mc_\tim$ depends on the small parameter $\emc$. The main idea is that although $\mc_\tim$ is time-varying but it is piecewise constant and since $\emc$ is small parameter, it changes slowly over time. Also from (\ref{eq:true1}), the evolution of $\bg_\tim^\mc$ depends on $\frac{1}{N_0}$. Our assumption throughout this paper is that $\emc \ll \frac{1}{N_0}$. This means that the evolution of $\bg_\tim^\mc$ is faster than the evolution
of $\mc_\tim$ or equivalently it can be said that $\bg_\tim^\mc$ reaches its stationary
distribution ($\bg(\mc)$) before the state of $\mc_\tim$ changes.

\subsection*{Example: Searchability of a Network} So far in this section, an asymptotic analysis of the degree distribution was presented for a random graph generated according to Algorithm~\ref{alg0}. We now comment briefly on how the degree distribution can be used to investigate the searchability of the network.
This also motivates the stochastic approximation algorithm presented in Sec.\ref{sec:pmf} as will be described below. The search problem arises in a network when a specific node faces a
problem (request) whose solution is at other node (e.g., delivering a letter to a specific person or  finding a web page with specific information).
Assume \cite{CSN} that on receiving a search request, each node follows the following protocol:  (a) It address the request if it or its neighbors have the solution; otherwise (b) it relays the request to one
of its neighbors chosen uniformly.
The objective is to find the expected search delay, that is, the expected number of steps until the request is addressed.


  \begin{Lemma}\label{lem:search} Consider the sequence of  fixed size Markov-modulated duplication-deletion random graph
 obtained by Algorithm \ref{alg0}, $\{\graph_\tim\}$, with $(M, A^\emc,\pi_0,\p,\pdel,\pdup, \graph_0)$ where $A^\emc
= I + \emc Q$ and $\pdup = 0$ and expected
degree distribution $\bg_\tim$. The expected search delay
 is \beq \label{eq:delay} \delay(N_0) =
O\left(\frac{N_0\de}{\dd - \de}\right), \eeq as $\tim \rightarrow
\infty$ where $\de = \sum_{i=1}^{N_0}i\bg_\tim(i)$ and $\dd =
\sum_{i=1}^{N_0}i^2\bg_\tim(i) $.
\end{Lemma}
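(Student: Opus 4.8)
The plan is to treat the search protocol as a random walk on $\graph_\tim$ in which the current node $v$, upon receiving the request, resolves it exactly when the target lies in $\{v\}\cup N(v)$ and otherwise relays to a uniformly chosen neighbour; the quantity to control is the number of \emph{distinct} nodes that the growing union of these one-hop neighbourhoods covers per relay step. First I would record the size-biased degree law: a node reached by following a uniformly chosen incident edge has degree $i$ with probability $i\bg_\tim(i)/\sum_{j} j\bg_\tim(j)=i\bg_\tim(i)/\de$, since a degree-$i$ node is hit through $i$ times as many edges as a degree-$1$ node. A freshly visited node of degree $i$ exposes the $i-1$ neighbours other than the edge the request arrived along, and in the locally tree-like regime of the sparse graphs produced by Algorithm~\ref{alg0} these are new with probability $1-o(1)$; averaging against the size-biased law gives the expected per-step gain
\[
\sum_{i\ge 1}(i-1)\,\frac{i\bg_\tim(i)}{\de}=\frac{\dd-\de}{\de}.
\]

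Next I would convert this into a hitting time. Writing $C_s$ for the set of covered nodes after $s$ relays, the above yields $\E\,|C_s|=s(\dd-\de)/\de+O(1)$, so after $s$ steps a fraction $\Theta\big(s(\dd-\de)/(\de N_0)\big)$ of the $N_0$ nodes is covered. The search terminates at $s^\star=\min\{s:\text{target}\in C_s\}$; since the target is drawn uniformly over the $N_0$ nodes and is independent of the walk, given the order in which the walk covers the vertices the target is equally likely to be the $j$-th node ever added for each $j\le N_0$, so $\E[s^\star]$ equals $\tfrac12 N_0$ divided by the per-step gain, up to lower order:
\[
\delay(N_0)=\frac{\tfrac12 N_0}{(\dd-\de)/\de}+O(1)=O\!\left(\frac{N_0\,\de}{\dd-\de}\right).
\]
Finally, because the statement is asymptotic in $\tim$, I would invoke Theorem~\ref{theo:mu}: for each Markov state $\mc$ the recursion (\ref{eq:true1}) drives $\bg^{\mc}_\tim$ to the stationary vector $\bg(\mc)$ of (\ref{eq:gbar}), and since every degree is bounded by $N_0$ the moments $\de=\sum_i i\bg_\tim(i)$ and $\dd=\sum_i i^2\bg_\tim(i)$ converge together with $\bg_\tim$; substituting the limiting moments gives the stated order as $\tim\to\infty$.

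The step I expect to be the main obstacle is justifying the ``new with probability $1-o(1)$'' claim, i.e. bounding the overlap of successive one-hop neighbourhoods (backtracking of the walk, revisited vertices, shared neighbours). I would control this by observing that while $|C_s|$ is a small fraction of $N_0$ the chance that a newly exposed neighbour already lies in $C_s$ is $O(|C_s|/N_0)$, so the cumulative overlap only inflates the constant absorbed into $O(\cdot)$; the extra clustering created by the edge-duplication step, which correlates a duplicated node's neighbourhood with that of its template, needs a separate but analogous estimate, and it is there — rather than in the elementary moment computation or the passage to the limit — that the real work lies. Alternatively one may simply quote the corresponding coverage estimate of \cite{CSN} for a graph with degree law $\bg_\tim$ and then pass to the limit via Theorem~\ref{theo:mu}, which is the shorter route.
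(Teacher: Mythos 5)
Your derivation is a genuinely different route from the paper's: the paper's entire proof of Lemma~\ref{lem:search} is the single line ``See Chapter 5 of \cite{CSN} and recall that the size of the graph is $N_0$,'' i.e.\ it imports the search-delay estimate as a quoted fact, whereas you reconstruct the underlying argument (size-biased degree law, expected excess degree $(\dd-\de)/\de$ per relay, coverage of a uniform target after $s$ steps, hence $\E[s^\star]=O(N_0\de/(\dd-\de))$). That reconstruction is the standard one and is exactly the calculation behind the cited result, so at the level of rigor at which the paper operates your proof is complete; your closing remark that one may ``simply quote the corresponding coverage estimate of \cite{CSN}'' is literally the paper's proof. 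What your version buys is transparency about where the moments $\de$ and $\dd$ come from and an explicit bridge, via Theorem~\ref{theo:mu}, from the time-$\tim$ distribution $\bg_\tim$ to its stationary limit, which the paper leaves implicit.

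One caveat deserves emphasis, and you have already put your finger on it: the ``new neighbour with probability $1-o(1)$'' step presumes a locally tree-like graph, but the edge-duplication step of Algorithm~\ref{alg0} connects the new node to the parent \emph{and} to a $\p$-fraction of the parent's neighbours, so the model manufactures triangles by design and has clustering bounded away from zero. Overlap between successive one-hop neighbourhoods is therefore not merely an $O(|C_s|/N_0)$ finite-size correction; it reduces the per-step gain by a constant factor depending on $\p$ and $\pdel$. For the stated $O(\cdot)$ conclusion a constant factor is harmless, but the claim that the deficit is absorbed into the constant needs the one-sentence justification that the expected number of already-covered neighbours among the $i-1$ exposed ones is at most a fixed fraction of $i-1$ (uniformly in $s$ while $|C_s|=o(N_0)$), rather than the $o(1)$ you assert. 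Since the paper itself proves nothing here, this does not put you behind the paper, but it is the one place where your self-contained argument is weaker than its own advertised strength.
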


\begin{proof}
See Chapter 5 of \cite{CSN} and recall that size of the considered random
graph is $N_0$.
\end{proof}
Lemma~\ref{lem:search} implies that, if the empirical degree distribution of the possibly time-varying network can  tracked accurately, then such an estimate can be used to track the searchability of the network. Also, using the estimated degree
distribution and Lemma~\ref{lem:search}, we can address the following
design problem as: \textit{How can  $\p$ and
$\pdel$ in Algorithm~{\ref{alg0}} be chosen so that the average delay does not
exceed a threshold?}\\
Using the stochastic approximation algorithm in (\ref{eq:sa}) (see Sec.\ref{sec:pmf} below for the convergence proof), we can
estimate the expected degree distribution, $\hg_\tim$, and from that, we can
compute  $\de$ and $\dd$. Then, from Lemma~\ref{lem:search} we can find
the measure of searchability and compare it with the maximum acceptable
average delay and modify the parameters of Algorithm~\ref{alg0}
accordingly. We illustrate searchability in numerical examples given in Sec.\ref{sec:num}.
\section{Estimating (Tracking) the Degree Distribution of the Fixed Size Markov-modulated Duplication-deletion Random Graph }\label{sec:pmf}
In Sec.\ref{sec:markov}, a degree distribution analysis is provided for the fixed size Markov-modulated duplication-deletion random graph generated by
Algorithm \ref{alg0} with 7-tuple
$(M,A^{\emc},\pi_0,\pdup,\p,\pdel,\graph_0)$, where
$\pdup = 0$, $\graph_0$ is a simple connected grapeh of size $N_0$  and
$A^{\emc}$ is defined in (\ref{eq:A}). In this section we assume that the
empirical degree distribution of this graph, $\g_\tim$, is observed in
noise by a network administrator. How can the network administrator track
the expected degree distribution of the fixed size Markov-modulated duplication deletion random graph
without knowing the dynamics of the graph?
Suppose that the vertex distribution $\degree_\tim$ generated according
to Algorithm \ref{alg0} is measured in noise by the administrator of the
social network. That is, the measurement is
\beq \hat{\degree}_\tim = \degree_\tim + \noisee_\tim.   \eeq
Here, at each time $\tim$, the elements $\noisee_\tim(i)$  of the noise
vector are integer-valued  zero mean random variables and
$\sum_{i\geq1} \noisee_\tim(i) = 0$. The zero sum assumption ensures that
$\hat{\degree}_\tim$ is a valid empirical distribution.
In terms of the empirical vertex distribution, we can rewrite this
measurement process as
$$\obs_\tim(i) = \frac{\hat{\degree}_\tim(i)}{\sum_{i\geq 0}
\hat{\degree}_\tim(i)} =  \frac{\hat{\degree}_\tim(i)}{N_0} =
\g_\tim(i) + \frac{1}{N_0}\noisee_\tim(i) $$
that the vertex distribution $\g_\tim$ of the graph
$\graph_\tim$ generated according to Algorithm \ref{alg0} is measured in
noise by the administrator of the social network. That is, the
measurement is \beq \obs_\tim = \g_\tim + \noise_\tim   \eeq where
$\noise_\tim = \frac{\noisee_\tim}{N_0}$. Recall that  $N_\tim = N_0$ when $\pdup = 0$.
The normalized noisy observations from the monitoring node, $\obs_\tim$,
are used to estimate the empirical probability mass function of degree of
each node.
To estimate  a time varying PMF,  the following stochastic approximation
algorithm with constant step size, $\esa$ (where $\esa$ denotes a small
positive constant), is used to estimate the empirical probability mass
function:
\begin{equation}\label{eq15}
\hg_{\tim+1} = \hg_\tim +\esa\left(\obs_\tim - \hg_\tim\right).
\end{equation}
 Note that the stochastic approximation algorithm (\ref{eq15}) does not
assume any knowledge of the Markov-modulated dynamics of the graph. The
Markov chain assumption for the random graph dynamics is only used in our
convergence and tracking analysis. Our goal is to analyze how well the
algorithm tracks the empirical node degree of the graph. This section
studies the asymptotic behavior of the estimated degree distribution. Let
$ \E\{\bg(\mc_\tim)\}$ denote the expectation of $\bg(\mc_\tim)$ with
respect to $\sigma$-algebra, $\mathcal{G}$, generated by $\{\obs_k,\quad
k \leq \tim\}$. First, we show that the difference between
$ \E\{\bg(\mc_\tim)\}$ and $\hg_\tim$, obtained by stochastic
approximation, is bounded and the upper bound depends on $\esa$ and
$\emc$.

 \subsection{Tracking Error of the Stochastic Approximation
Algorithm}\label{subsec:bound}
 Recall that the tracking error is $\tg_\tim = \hg_\tim -
\E\{\bg(\mc_\tim)\}$. Theorem~\ref{theo2} below shows that the difference
between sample path and the expected probability mass function is small -
-implying that the stochastic approximation algorithm can successfully
track the Markov-modulated node distribution given noisy measurements (We
again emphasize that not knowledge of the Markov chain parameters are
required in the algorithm). It also finds the order of this difference in
terms of $\esa$ and $\emc$.

\begin{Theorem}
\label{theo2}
Consider the random graph $(M, \A^{\emc},\pi_0,\p,\pdel,\pdup,
\graph_0)$. Suppose that $\emc^2 = o(\esa)$\footnote{Note that in this paper, we assume that $\emc = O(\esa)$, therefore $\emc^2 = o(\esa)$ is a consequence.}.
Then for sufficiently large $\tim$ the tracking error of the stochastic
approximation (\ref{eq:sa}) is
\begin{equation}
\mathbf{E}|\tg_\tim|^2 = O\left(\esa+\emc +\frac{\emc^2}{\esa}\right).
\end{equation}
\qed
\end{Theorem}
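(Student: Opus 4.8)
The plan is to rewrite the stochastic approximation recursion \eqref{eq:sa} as a recursion for the tracking error $\tg_\tim=\hg_\tim-\E\{\bg(\mc_\tim)\}$ and then run a perturbed‑Liapunov function argument together with error bounds for the two‑time‑scale (slow) Markov chain, in the spirit of \cite{YKI04,KTY09}. Substituting $\obs_\tim=\g_\tim+\noise_\tim$ into \eqref{eq:sa} and inserting $\pm\bg(\mc_\tim)$ and $\pm\E\{\bg(\mc_\tim)\}$ yields a recursion of the form
\beq
\tg_{\tim+1}=(1-\esa)\tg_\tim+\esa\,\phi_\tim+\zeta_\tim ,
\eeq
where $\phi_\tim$ gathers the bounded, zero‑(conditional‑)mean fluctuations — the observation noise $\noise_\tim$, the deviation $\g_\tim-\E\{\g_\tim\mid\mc_\tim\}$ of the empirical distribution from its conditional mean, the deviation $\bg(\mc_\tim)-\E\{\bg(\mc_\tim)\}$ of the target from its filtered value, and the transient bias $\E\{\g_\tim\mid\mc_\tim\}-\bg(\mc_\tim)$ — while $\zeta_\tim=\E\{\bg(\mc_\tim)\}-\E\{\bg(\mc_{\tim+1})\}$ is the one‑step increment of the target. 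Two a‑priori estimates feed the argument. First, every summand of $\phi_\tim$ is bounded (all the relevant vectors are probability vectors; the integer‑valued zero‑sum noise has bounded second moment) and the transient bias decays geometrically: by Theorem~\ref{theo:mu}, $\bg^\mc_\tim$ satisfies the contraction \eqref{eq:true1} whose unique fixed point is $\bg(\mc)$, and since $\emc\ll 1/N_0$ this fast recursion relaxes before $\mc_\tim$ moves, so for $\tim$ large this bias is $O(\emc)$; hence $\E|\phi_\tim|^2=O(1)$. Second, because $\A^\emc=I+\emc Q$ one step of $\mc_\tim$ changes $\bg(\mc_\tim)$ with probability only $O(\emc)$, and the standard two‑time‑scale / HMM‑filter estimates then give $\E|\zeta_\tim|^2=O(\emc^2)$, with the deterministic drift part of $\zeta_\tim$ of pointwise size $O(\emc)$ and the remainder a martingale difference.

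Take the nominal Liapunov function $V_\tim=|\tg_\tim|^2$ and expand $\E\{V_{\tim+1}\mid\mathcal F_\tim\}$ along the recursion, where $\mathcal F_\tim$ is the natural filtration. The term $(1-\esa)^2|\tg_\tim|^2$ provides the contraction; $\E|\esa\phi_\tim+\zeta_\tim|^2=O(\esa^2+\emc^2)$; the cross term with the drift part of $\zeta_\tim$ is bounded, by Young's inequality $2ab\le\tfrac{c\esa}{4}a^2+\tfrac{4}{c\esa}b^2$, by $\tfrac{c\esa}{4}|\tg_\tim|^2+O(\emc^2/\esa)$ — which, together with iterating the $\E|\zeta_\tim|^2=O(\emc^2)$ contribution, is where the $\emc^2/\esa$ term of the bound is born. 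The only genuinely delicate contribution is $2\esa(1-\esa)\langle\tg_\tim,\phi_\tim\rangle$: although $\phi_\tim$ has zero conditional mean, $\tg_\tim$ and $\phi_\tim$ are correlated through the past of the Markovian environment and of the graph, so a one‑step conditioning is not enough. I would remove it by adding a perturbation $V^{(1)}_\tim$ built from conditional tail sums of the type $\sum_{j\ge\tim}\E\{\langle\tg_\tim,\phi_j\rangle\mid\mathcal F_\tim\}$ whose one‑step increment is designed to cancel $2\esa(1-\esa)\langle\tg_\tim,\phi_\tim\rangle$ up to $O(\esa^2+\esa\emc)$; convergence of these sums, and the bound $|V^{(1)}_\tim|=O(\esa)$ so that $V_\tim+V^{(1)}_\tim$ has the same order as $V_\tim$, follow from geometric ergodicity of the fast recursion \eqref{eq:true1} and of the slow chain, uniformly in $\mc$.

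Collecting the estimates gives, for $\tim$ large enough,
\beq
\E\{V_{\tim+1}+V^{(1)}_{\tim+1}\}\le(1-c'\esa)\,\E\{V_\tim+V^{(1)}_\tim\}+O\big(\esa^2+\esa\emc+\emc^2\big),
\eeq
and iterating this scalar geometric recursion gives $\limsup_\tim\E\{V_\tim+V^{(1)}_\tim\}=O\big(\esa+\emc+\emc^2/\esa\big)$ (using $\emc\le\tfrac12(\esa+\emc^2/\esa)$, so the middle term is actually subsumed and is kept only for interpretation); absorbing the $O(\esa)$ perturbation yields $\E|\tg_\tim|^2=O(\esa+\emc+\emc^2/\esa)$, as claimed. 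The main obstacle is the pair of estimates that make the perturbed‑Liapunov step work: the HMM/two‑time‑scale bound $\E|\zeta_\tim|^2=O(\emc^2)$ saying that the filtered target varies only slowly in mean square, and the mixing bounds needed to build and estimate $V^{(1)}$ so that the coupling between the slowly varying unobserved environment $\mc_\tim$ and the adaptive iterate $\hg_\tim$ produces only higher‑order cross terms — both resting on the quantitative decoupling $\emc\ll 1/N_0$ between the slow chain and the fast graph recursion \eqref{eq:true1}. A secondary point is verifying that the transient bias has fallen below order $\esa+\emc$, which is why the statement is asymptotic in $\tim$.
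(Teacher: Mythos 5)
Your proposal is correct and follows essentially the same route as the paper: a perturbed-Liapunov-function argument on the error recursion $\tg_{\tim+1}=(1-\esa)\tg_\tim+\esa\left(\obs_{\tim+1}-\E\{\bg(\mc_\tim)\}\right)+\E\{\bg(\mc_\tim)-\bg(\mc_{\tim+1})\}$, with conditional tail-sum perturbations absorbing the correlated cross terms and the two-time-scale estimate $\E\{\bg(\mc_\tim)-\bg(\mc_{\tim+1})\}=O(\emc)$ feeding the $\emc^2/\esa$ contribution before the geometric iteration. The only cosmetic difference is that you dispose of the slow-drift cross term via Young's inequality, whereas the paper cancels it with a second perturbation $V^{\emc}_2$ of size $O(\emc)(V+1)$ (which is where its explicit $O(\emc)$ term originates — a term you correctly note is anyway dominated by $\esa+\emc^2/\esa$); both routes give the stated bound.
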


The proof of Theorem~\ref{theo2} is presented in Appendix~\ref{ap:bound}.
In the proof, the perturbed Liapunov function methods are used.
As a corollary of  Theorem \ref{theo2}, we obtain the following
mean squares convergence result.
\begin{Corollary} Under the conditions of Theorem \ref{theo2}, if $\emc = O(\esa)$
we have
$$ \mathbf{E} |\tg_\tim|^2 = O(\esa).$$
and therefore,
$$\limsup_{\e\to 0} \mathbf{E} |\tg_\tim|^2 = 0.$$
\end{Corollary}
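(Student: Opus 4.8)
The plan is to derive this corollary as an immediate specialization of Theorem~\ref{theo2}, so the argument is essentially a substitution rather than a new estimate. First I would recall the bound proved there: for sufficiently large $\tim$,
\[
\mathbf{E}|\tg_\tim|^2 = O\!\left(\esa + \emc + \frac{\emc^2}{\esa}\right),
\]
with the implied constant uniform in the small parameters $\esa,\emc$ in a neighbourhood of the origin. Under the additional hypothesis $\emc = O(\esa)$ there is a constant $C$ with $\emc \le C\esa$ for all small $\esa$; feeding this into the three terms, the middle term is $O(\esa)$ directly, and the last term satisfies $\emc^2/\esa \le C^2\esa^2/\esa = C^2\esa = O(\esa)$. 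Hence the right-hand side collapses to $O(\esa)$, which is the first assertion $\mathbf{E}|\tg_\tim|^2 = O(\esa)$.

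For the second assertion, note that $\e$ in $\limsup_{\e\to 0}$ is the same quantity as the step size $\esa$ of (\ref{eq:sa}); since $\emc = O(\esa)$, sending $\esa \to 0$ simultaneously forces $\emc \to 0$, so the $O(\esa)$ bound just obtained tends to $0$. Therefore $\limsup_{\e\to 0}\mathbf{E}|\tg_\tim|^2 = 0$, and no further work is required once the first part is in hand.

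The only point needing a little care — and the closest thing to an obstacle — is the order of the two limiting operations implicit in the statement. Theorem~\ref{theo2} gives the $O(\cdot)$ bound for $\tim$ large, so the clean reading is: for each fixed small $\esa$ (hence small $\emc = O(\esa)$), $\limsup_{\tim\to\infty}\mathbf{E}|\tg_\tim|^2 \le K\esa$ for a constant $K$ independent of $\esa$, after which one lets $\esa \to 0$. This chaining is legitimate precisely because the constant produced by the perturbed-Liapunov analysis in Appendix~\ref{ap:bound} is uniform in $\esa$ and $\emc$; I would simply make that uniformity explicit when invoking Theorem~\ref{theo2}, and the corollary then follows with no additional calculation.
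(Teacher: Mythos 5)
Your proposal is correct and is exactly the argument the paper intends (the paper states the corollary as an immediate consequence of Theorem~\ref{theo2} without writing out the substitution): with $\emc \le C\esa$ each of the three terms in $O\bigl(\esa + \emc + \emc^2/\esa\bigr)$ is dominated by a constant times $\esa$, and letting $\esa \to 0$ gives the second assertion. Your added remark about taking $\tim$ large first for each fixed $\esa$ and relying on the uniformity of the constant from the perturbed-Liapunov analysis is a sensible clarification, not a departure from the paper's route.
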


\subsection{Limit System of Regime-Switching Ordinary Differential
Equations}\label{subsec:ode}
Theorem~\ref{theo3} shows that the sequence of estimates generated by the
stochastic approximation algorithm (\ref{eq15}) converges weakly to the
dynamics of a Markov-modulated ordinary differential equation.

\begin{Theorem}\label{theo3}
Consider the Markov-modulated random graph generated by
Algorithm~\ref{alg0}, and the sequence of estimates
$\{\hg_\tim\}$ generated by stochastic approximation algorithm
(\ref{eq15}). Assume condition (A) holds, and $\emc = O(\esa)$. Define
the continuous-time interpolated process
\begin{equation}
\hg^\esa(\ct) = \hg_\tim, \  \mc^\esa(\ct) = \mc_\tim
 \ \ct\in[\tim\esa,(\tim+1)\esa).
\end{equation}
 Then as $\esa\rightarrow 0$,
$(\hg^\esa(\cdot),\mc^\esa(\cdot))$ converges weakly to
$(\hg(\cdot),\mc(\cdot))$ such that $\mc$ is continuous-time Markov chain
with generator $Q$ and $\hg(\cdot)$ satisfies the Markov-modulated
ordinary differential equation (ODE)
 \begin{equation}
 \label{diff2}
 \frac{d\hg(\ct)}{d\ct} = -\hg(\ct) + \bg(\mc(\ct)), \quad\hg(0) = \hg_0,
 \end{equation}
 where $\bg(\mc)$ is defined in (\ref{eq:gbar}).
 \qed
\end{Theorem}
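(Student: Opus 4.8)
The plan is to follow the weak-convergence programme for constant-step-size stochastic approximation driven by a slowly-varying Markov ``hyperparameter'', as developed in \cite{KY03,KTY09}. The argument rests on four ingredients: (i) tightness of the pair $(\hg^\esa(\cdot),\mc^\esa(\cdot))$ in the Skorohod space $D([0,\infty);\rr^{N_0}\times\M)$; (ii) identification of the limit of $\mc^\esa(\cdot)$ as a continuous-time Markov chain with generator $Q$, using two-time-scale Markov chain theory; (iii) an averaging step showing that the ``observation'' input $\obs_\tim=\g_\tim+\noise_\tim$ averages, on windows long on the iteration scale but short on the interpolation scale, to the stationary value $\bg(\mc_\tim)$ of (\ref{eq:gbar}); and (iv) uniqueness of the associated martingale problem, which upgrades subsequential convergence to convergence of the whole family.

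First I would show $\sup_{\tim,\esa}\E|\hg_\tim|^2<\infty$: writing $\hg_{\tim+1}=(1-\esa)\hg_\tim+\esa\obs_\tim$ with $|\g_\tim|\le 1$ (a probability vector) and $\noise_\tim$ zero-mean with bounded second moment, a routine recursive (Gronwall-type) estimate gives the uniform bound, and in fact $\hg_\tim$ stays close to the simplex. Since the increments obey $|\hg_{\tim+1}-\hg_\tim|=\esa|\obs_\tim-\hg_\tim|=O(\esa)$ in $L^2$, the interpolation $\hg^\esa(\cdot)$ has uniformly bounded values on compact intervals and mean-square-small oscillations over small time windows, hence is tight by the standard criterion of \cite{KY03}. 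For $\mc^\esa(\cdot)$, condition (A) (irreducibility of $Q$) together with $\A^{\emc}=I+\emc Q$ and $\emc=O(\esa)$ means that $O(1/\esa)$ iterations correspond to $O(1)$ units of interpolated time and an $O(1)$ expected number of state changes, so by the asymptotics of slow Markov chains (see \cite{YKI04}) $\mc^\esa(\cdot)$ is tight and converges weakly to a continuous-time Markov chain with generator $Q$. Thus $(\hg^\esa(\cdot),\mc^\esa(\cdot))$ is tight; extract a weakly convergent subsequence with limit $(\hg(\cdot),\mc(\cdot))$.

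Next I would characterize the limit via the martingale problem for the operator $\mathcal{L}\phi(\g,\mc)=\nabla_{\g}\phi(\g,\mc)\cdot\big(-\g+\bg(\mc)\big)+\sum_{j}q_{\mc j}\phi(\g,j)$, for $\phi\in C_b^2$. Telescoping over a window of length $m=m(\esa)$ with $m\to\infty$, $\esa m\to 0$, and Taylor-expanding, the $\hg$-increment contributes $\esa\sum_{k=\tim}^{\tim+m-1}\nabla_{\g}\phi(\hg_k,\mc_k)\cdot(\obs_k-\hg_k)+O(\esa^2 m)$, while the $\mc$-increment contributes the standard continuous-time-Markov-chain martingale term. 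Now decompose $\obs_k-\hg_k=-(\hg_k-\bg(\mc_k))+(\g_k-\bg(\mc_k))+\noise_k$. The term $\noise_k$ is a martingale difference with bounded variance and averages to zero; the term $-(\hg_k-\bg(\mc_k))$ reproduces the drift $-\hg+\bg(\mc)$ once $\hg_k\approx\hg_\tim$ and $\mc_k\approx\mc_\tim$ over the window. The crux is the bias $\g_k-\bg(\mc_k)$: I would use that, conditional on $\mc$ being frozen over a window, $\{\graph_k\}$ is a finite, irreducible (hence ergodic) Markov chain whose degree-distribution one-step dynamics are encoded by $\transition(\mc)$, so by Theorem~\ref{theo:mu} the conditional mean of $\g_k$ converges geometrically (at a rate set by the spectral gap of $\ttrue(\mc)=I+\tfrac1{N_0}\transition(\mc)$) to the stationary $\bg(\mc)$ of (\ref{eq:gbar}); moreover, since $\emc=O(\esa)$ and $\esa m\to 0$, $\mc_\tim$ changes over a window only with probability $O(\emc m)=o(1)$, and the standing assumption $\emc\ll\frac{1}{N_0}$ ensures the graph relaxes to $\bg(\mc)$ between successive jumps of $\mc_\tim$. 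Combining these with the uniform $L^2$ bounds, the correlations among $(\hg_k,\g_k,\noise_k)$ can be decoupled and $\esa\sum_k\nabla_{\g}\phi(\hg_k,\mc_k)\cdot(\g_k-\bg(\mc_k))\to 0$ in mean over each window; summing the windows and passing to the limit shows $\phi(\hg^\esa(\ct),\mc^\esa(\ct))-\int_0^\ct\mathcal{L}\phi(\hg^\esa(s),\mc^\esa(s))\,ds$ is asymptotically a martingale, so $(\hg(\cdot),\mc(\cdot))$ solves the martingale problem for $\mathcal{L}$; i.e. $\mc(\cdot)$ is a CTMC with generator $Q$ and $\hg(\cdot)$ satisfies (\ref{diff2}) driven by $\mc(\cdot)$.

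Finally, (\ref{diff2}) is a linear ODE in $\hg$ with bounded piecewise-constant forcing $\bg(\mc(\cdot))$, so it has a unique solution along every sample path of $\mc(\cdot)$; hence the martingale problem for $\mathcal{L}$ is well posed, the weak limit is unique, and the full family $(\hg^\esa(\cdot),\mc^\esa(\cdot))$ — not just a subsequence — converges weakly to $(\hg(\cdot),\mc(\cdot))$. The main obstacle is step (iii): controlling the bias $\g_k-\bg(\mc_k)$ requires carefully exploiting the three-way separation of scales (the step size $\esa$ of the stochastic approximation, the $1/N_0$-relaxation of the graph's degree distribution, and the rate $\emc$ of the Markov modulation, with $\emc=O(\esa)$ and $\emc\ll\frac{1}{N_0}$) together with ergodicity/mixing of the graph-valued chain for frozen $\mc$ and the uniform moment bounds on the iterates, so that the window averages of the correlated triple $(\hg_k,\g_k,\noise_k)$ can be handled term by term.
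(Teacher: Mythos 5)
Your proposal is correct and follows essentially the same route as the paper's own (sketched) argument: tightness of the interpolated pair via the Kushner criterion, weak convergence of $\mc^\esa(\cdot)$ to a continuous-time Markov chain with generator $Q$ by two-time-scale theory, and identification of the limit as the unique solution of the martingale problem for the operator $L_0 f(\hg,i)=\nabla f'(\hg,i)[-\hg+\bg(i)]+\sum_j q_{ij}f(\hg,j)$. In fact your step (iii), which makes explicit how the bias $\g_k-\bg(\mc_k)$ is averaged out using the relaxation of the graph chain under the scale separation $\emc\ll 1/N_0$, supplies detail that the paper's sketch leaves implicit.
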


Note that (\ref{diff2}) is a Markov-modulated ordinary differential
equation. The above theorem asserts that the empirical measure obtained by
stochastic approximation algorithm (\ref{eq15}) converges weakly to
Markovian switched ODE (\ref{diff2}).  As mentioned in
Sec.\ref{sec:intro}, this is unusual since typically in averaging
 of stochastic approximation algorithms, convergence occurs to a
deterministic differentia equation. The intuition behind that the
estimates obtained by (\ref{eq15}) converges to a Markov-modulated ODE
(rather than a deterministic ODE) is that the Markov chain (with
transition matrix $I + \emc Q$ ) evolves on the same time scale as the
stochastic approximation algorithm with step size $\esa$ (when $\emc =
O(\esa)$). If the Markov chain evolved on a faster time scale, then the
limiting dynamics would indeed be a deterministic ODE weighed by the
stationary distribution for the Markov chain. If the Markov chain evolved
slower than the dynamics of the stochastic approximation algorithm, then
the asymptotic behavior would also be a deterministic ODE with the Markov
chain being a constant.

\subsection{Scaled Tracking Error}\label{subsec:er}
The following theorem studies the behavior of the scaled tracking error
between the estimates generated by the stochastic approximation algorithm
(\ref{eq15}) and the expected degree distribution and proves that this
error should also satisfy a switching diffusion equation.
Theorem~\ref{theo4} gives a functional central limit theorem for this
scaled tracking error. Let $\ser_k = \frac{\hg_k -
\mathbf{E}\{\bg(\mc_k)\}}{\sqrt{\esa}}$ denote the scaled tracking error.

\begin{Theorem}\label{theo4}
Assume condition (A) holds.
Define
$\ser^\esa(t) = \ser_k$ for $t \in [k\esa, (k+1)\esa)$.
Then $(\nu^\e\cd,\theta^\e\cd)$ converges
weakly $(\nu\cd,\theta\cd)$ such that
$\nu\cd$ is the solution of the following Markovian switched diffusion
process
\begin{equation}\label{de}
\ser(t) = -\int_0^t \ser(s) ds + \int_0^t
\cov^{\frac{1}{2}}(\mc(\tau))d\omega(\tau),
\end{equation}
where $\omega(\cdot)$ is an $\rr^{N_0}$-dimensional standard Brownian motion.
The covariance matrix,
$\cov(\mc)$, in (\ref{de}) can be explicitly computed as
\beq\label{eq:cov3}
\cov(\mc) = Z(\mc)'D(\mc) + D(\mc)Z(\mc) - D(\mc) - \bg(\mc)\bg'(\mc).
\eeq
Here, $D(\mc) = \diag(\bg(\mc,1),\ldots,\bg(\mc, \s))$ and $Z(\mc) =
\left(I - \ttrue(\mc) +\mathbf{1}\bg'(\mc)\right)^{-1}$ where
$\ttrue(\mc_\tim)$ and $\bg(\mc)$ are defined in (\ref{eq:B})
and (\ref{eq:gbar}), respectively. \qed
\end{Theorem}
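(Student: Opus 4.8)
The plan is to follow the weak-convergence (martingale-problem) programme for constant-step-size stochastic approximation with Markov-dependent noise, in the spirit of \cite{KTY09} and \cite[Ch.~10]{KY03}. First I would write the recursion for the tracking error: substituting $\obs_\tim=\g_\tim+\noise_\tim$ in (\ref{eq15}) and subtracting $\E\{\bg(\mc_{\tim+1})\}$ gives
\[
\tg_{\tim+1}=(1-\esa)\tg_\tim+\esa\bigl(\bg(\mc_\tim)-\E\{\bg(\mc_\tim)\}\bigr)+\bigl(\E\{\bg(\mc_\tim)\}-\E\{\bg(\mc_{\tim+1})\}\bigr)+\esa\,\xi_\tim,\qquad \xi_\tim:=\bigl(\g_\tim-\bg(\mc_\tim)\bigr)+\noise_\tim,
\]
so the scaled error $\ser_\tim=\tg_\tim/\sqrt{\esa}$ carries a linear drift $-\esa\ser_\tim$ plus noise of order $\sqrt{\esa}$. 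The structural fact I would exploit, inherited from Theorem~\ref{theo:mu}, is that in regime $\mc$ the node degrees evolve as a finite-state Markov chain with transition matrix $\ttrue(\mc)=I+\transition(\mc)/\s$ and stationary distribution $\bg(\mc)$; hence the empirical-distribution fluctuation $\g_\tim-\bg(\mc_\tim)$ is exponentially mixing but \emph{not} a martingale-difference sequence, which is precisely what forces a fundamental matrix into the limiting covariance.

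Second, I would establish tightness. Taking $\emc=O(\esa)$ in Theorem~\ref{theo2} gives $\E|\tg_\tim|^2=O(\esa)$, hence $\sup_\tim\E|\ser_\tim|^2\le C<\infty$; feeding this back into the recursion and summing over a short index window yields $\E\bigl[\,|\ser^\esa(\ct+s)-\ser^\esa(\ct)|^2\mid\mathcal{F}_\tim\bigr]\le Cs+o(1)$ for small $s>0$, i.e.\ Kurtz's criterion, so $\{\ser^\esa(\cdot)\}$ is tight in $D([0,\infty);\rr^{\s})$. Tightness of $\{\mc^\esa(\cdot)\}$ and its weak convergence to the continuous-time Markov chain with generator $Q$ is the standard two-time-scale result for the slow chain $A^\esa=I+\esa Q$ interpolated on the $\esa$ scale.

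Third, I would identify the limit through the martingale problem. Letting $(\ser(\cdot),\mc(\cdot))$ be any weak limit point along a subsequence, I would show that for test functions $F\in C_0^2(\rr^{\s})$ the process
\[
F(\ser(\ct),\mc(\ct))-\int_0^\ct\Bigl[-\ser(s)^{\prime}\nabla F(\ser(s),\mc(s))+\tfrac12\trace\bigl(\cov(\mc(s))\nabla^2 F(\ser(s),\mc(s))\bigr)+(QF(\ser(s),\cdot))(\mc(s))\Bigr]ds
\]
is a martingale, which characterizes $(\ser,\mc)$ as the unique-in-law solution of (\ref{de}), with the $\rr^{\s}$-Brownian motion $\omega(\cdot)$ recovered from the continuous martingale part. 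Operationally this comes from telescoping $F(\ser_{\tim+1},\mc_{\tim+1})-F(\ser_\tim,\mc_\tim)$, Taylor-expanding to second order and taking conditional expectations: the linear term gives $-\ser^{\prime}\nabla F$; the regime increment gives $(QF)(\mc_\tim)$ via $A^\esa=I+\esa Q$ with $\emc=O(\esa)$; and the quadratic term gives $\tfrac12\trace(\E[\xi_\tim\xi_\tim^{\prime}\mid\mathcal{F}_\tim]\nabla^2 F)$, whose accumulation over an $O(1/\esa)$ window — after inserting a perturbed-test-function (Poisson-equation) correction that cancels the temporal correlations of $\xi_\tim$ — converges to $\tfrac12\trace(\cov(\mc(s))\nabla^2 F)$. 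The same bookkeeping shows that the centered increments $\bg(\mc_\tim)-\E\{\bg(\mc_\tim)\}$ and $\E\{\bg(\mc_\tim)\}-\E\{\bg(\mc_{\tim+1})\}$ contribute neither extra drift nor, conditioned on the $\mc$-path, extra diffusion in the limit, exactly as in \cite{KTY09}.

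Finally, I would compute $\cov(\mc)$. Within regime $\mc$ it is the long-run central-limit covariance, about $\bg(\mc)$, of the empirical/occupation measure of the Markov chain with transition matrix $\ttrue(\mc)$; writing the fundamental matrix $Z(\mc)=(I-\ttrue(\mc)+\mathbf{1}\bg^{\prime}(\mc))^{-1}$ and substituting into the classical Markov-chain empirical-measure CLT variance formula yields $\cov(\mc)=Z^{\prime}(\mc)D(\mc)+D(\mc)Z(\mc)-D(\mc)-\bg(\mc)\bg^{\prime}(\mc)$ with $D(\mc)=\diag(\bg(\mc))$, which is (\ref{eq:cov3}); the zero-mean, bounded-variance measurement noise $\noise_\tim$ contributes only a lower-order term (or is absorbed into $\cov(\mc)$). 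The main obstacle is the correlated-noise averaging in the third step: because $\g_\tim-\bg(\mc_\tim)$ is Markov-correlated rather than a martingale difference, its one-step covariance is \emph{not} the diffusion coefficient, so the perturbed-Liapunov / Poisson-equation argument must be carried out to show that the summed autocorrelations collapse exactly onto the fundamental-matrix expression — and, since the regime $\mc_\tim$ evolves on the \emph{same} $O(\esa)$ time scale, this averaging has to be done locally within each sojourn of $\mc_\tim$ and then patched together; making the local averaging uniform in $\esa$, together with the tightness estimates, is where essentially all the work lies.
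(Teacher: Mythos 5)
Your proposal is correct and follows essentially the same route as the paper's (sketched) proof: tightness of the interpolated pair $(\nu^\esa(\cdot),\mc^\esa(\cdot))$, identification of any weak limit as the unique solution of the martingale problem with operator $\mathcal{L}f(\nu,i)=-\nabla f'(\nu,i)\,\nu+\frac{1}{2}\tr\left[\nabla^2 f(\nu,i)\,\cov(i)\right]+\sum_{j\in\M}q_{ij}f(\nu,j)$, and evaluation of $\cov(\mc)$ through the fundamental matrix $Z(\mc)$ of the chain with transition matrix $\ttrue(\mc)$ (the paper delegates this last step to \cite[Lemma 5.2]{YKI04}). The only procedural difference is that the paper establishes tightness and the limit identification via an $r_1$-truncation of the iterates followed by letting $r_1\to\infty$, whereas you substitute the uniform second-moment bound inherited from Theorem~\ref{theo2} together with Kurtz's criterion; both are standard devices and do not change the substance of the argument.
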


For general switching processes, we refer to \cite{YinZ10}. In fact, more
complex continuous-state dependent switching rather than Markovian switching was considered there.
 Eq. (\ref{eq:cov3}) reveals that the covariance matrix of the tracking
error depends on $\ttrue(\mc)$ and $\bg(\mc)$ and consequently
on the parameters of $\p$ and $\pdel$ of the random graph. Recall from Sec.\ref{sec:markov} that
$\ttrue(\mc)$ is the transition matrix of the Markov chain which
models the evolution of the expected degree distribution in Markov
modulated random graph and can be computed from Theorem~\ref{theo:mu}.
 We can interpret the covariance matrix in terms of searchability of the graph defined in Sec.\ref{sec:markov}.
Sec.\ref{sec:num} provides numerical examples that show that the trace of
the covariance matrix $\cov(\mc)$ is proportional to the searchability of
the graph generated by Algorithm~\ref{alg0}. Numerical examples in
Sec.\ref{sec:num} also show that the trace of covariance of the tracking
error is proportional to the average degree of nodes.
\section{Discussion and Extension: Power Law Component for Infinite Duplication-deletion Random Graph Without Markovian Dynamics}\label{subsec:pl}
   In Sec.\ref{sec:markov}, a degree distribution analysis is provided for the fixed size Markov-modulated random graph generated according to Algorithm~\ref{alg0} with $\pdup = 0$. This section extends the results of Sec.\ref{sec:markov} to the \textit{infinite duplication-deletion random graph without Markovian dynamics}. Here, we investigate the random graph generated according to Algorithm
\ref{alg0} with $\pdup = 1$ and when there are no Markovian dynamics,
that is, $M = 1$. Since $\pdup = 1$ for  $\tim \geq 0$,
$\graph_{\tim+1}$ has one more vertex compared to  $\graph_\tim$. In
particular, since $\graph_0$ is an empty set, $\graph_\tim$ has
$\tim$ nodes, that is, $N_\tim = \tim$. In this section, employing the same approach used in the proof of Theorem~\ref{theo:mu}, it is shown that the infinite duplication-deletion random graph without Markovian dynamics generated by Algorithm \ref{alg0} with $\pdup = 1$ satisfies a power law and an expression
 is derived for the power law component.
 Let us first define the power law:
\begin{Definition}[Power Law]
\label{def1}
Consider the infinite duplication-deletion random graph without Markovian dynamics
 generated according to
Algorithm \ref{alg0} with 7-tuple $(M,A^{\emc},\pi_0,\p,\pdel,\pdup,\graph_0)$. Let $n_k$ denote the number of nodes of degree
$k$ in a random graph $\graph_\tim$. Then $\graph_\tim$ satisfies a
power law distribution if $n_k$ is proportional to $k^{-\beta}$ for a
fixed $\beta > 1$ : $\log n_k = \alpha -\beta \log k$,  where $\alpha$ is
a constant. $\beta$ is called power law component.
\end{Definition}
\begin{Theorem}
\label{theo:pl}
    Consider the infinite random graph with Markovian dynamics $\graph_n$ obtained by Algorithm \ref{alg0}
with 7-tuple $(1,1,1,1,p,q,\graph_0)$ with the expected degree distribution $\bg_n$. As $\tim \rightarrow \infty$,
$G_n$ satisfies  a power law. That is \begin{equation} \log \bg_\tim(i) = \alpha -\beta \log i,
\end{equation} where the power law
component, $\beta$, can be computed from following equation.
    \begin{align}\label{eq:pl}
(1+\pdel)( \p^{\beta - 1 }+\p\beta - \p ) =1 + \beta \pdel,
\end{align}
where $\p$ and $\pdel$ are the probabilities defined in duplication and
deletion steps.\qed
\end{Theorem}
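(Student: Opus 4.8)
The plan is to follow the route used for Theorem~\ref{theo:mu}: derive an exact one-step recursion for the expected degree sequence, pass to the $\tim\to\infty$ mean-field limit (now with a growing graph, since $\pdup=1$ forces $N_\tim=\tim$), and then read off the power-law exponent from the balance of the limiting recursion at large degree $i$. First I would condition on $\graph_\tim$ and compute $\E\{\degree_{\tim+1}(i)\mid\graph_\tim\}$ for one pass of Algorithm~\ref{alg0} with $\pdup=1$, $M=1$: a duplication of a uniformly chosen node $u$ always occurs, and with probability $\pdel$ a uniformly chosen node $w$ is additionally deleted and a second duplication is performed. A node of degree $j$ equals the duplicated (resp.\ deleted) node with probability $1/N_\tim$ and is a neighbour of it with probability $j/N_\tim$; given the latter it gains one edge with probability $\p$ under a duplication and loses one edge under a deletion. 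The new copy $v$ has degree $1+\mathrm{Bin}(\deg(u),\p)$, so it lands in degree class $i$ with probability $v_\tim(i):=\sum_{k\ge i-1}\bg_\tim(k)\binom{k}{i-1}\p^{i-1}(1-\p)^{k-i+1}$. Collecting these contributions gives a recursion of the form
\begin{equation*}
\baf_{\tim+1}(i)=\baf_\tim(i)+\frac{1}{N_\tim}\sum_{j}\tilde l_{ji}\,\baf_\tim(j)+(1+\pdel)\,v_\tim(i),
\end{equation*}
where $\tilde l$ is the analogue of $\transition$ in (\ref{eq:L}), encoding the edge-rewiring of duplication and deletion and the removal of the deleted node, and where cross terms between the deletion substep and the two duplication substeps are $O(1/N_\tim^{2})$ and negligible.

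Next, since $N_\tim=\tim$, I would set $\bg_\tim(i)=\baf_\tim(i)/\tim$, so the recursion becomes $(\tim+1)\bg_{\tim+1}(i)-\tim\,\bg_\tim(i)=\sum_j\tilde l_{ji}\bg_\tim(j)+(1+\pdel)v_\tim(i)$. Assuming $\bg_\tim\to\bg$ componentwise --- the existence of this limit being established for the pure-duplication versions of the model in \cite{duplication,bebek25}, and the same argument accommodating the additional deletion terms --- Ces\`aro averaging of the left-hand side identifies $\bg$ as the solution of the stationary recursion
\begin{equation*}
\bg(i)=\sum_{j}\tilde l_{ji}\,\bg(j)+(1+\pdel)\sum_{k\ge i-1}\bg(k)\binom{k}{i-1}\p^{i-1}(1-\p)^{k-i+1},
\end{equation*}
which couples $\bg(i-1),\bg(i),\bg(i+1)$ through the degree-proportional entries of $\tilde l$, together with the convolution term produced by the newly created copies.

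Finally I would insert the ansatz $\bg(i)=Ci^{-\beta}$ valid for large $i$ (so $\alpha=\log C$). Using the identity $\sum_{k\ge i-1}\binom{k}{i-1}\p^{i-1}(1-\p)^{k-i+1}=1/\p$ together with the concentration of the event $\{\mathrm{Bin}(k,\p)=i-1\}$ near $k\approx(i-1)/\p$, a Laplace estimate shows the convolution term is asymptotically $C\,\p^{\beta-1}i^{-\beta}$. Substituting $\bg(i\pm1)/\bg(i)=(1\pm1/i)^{-\beta}=1\mp\beta/i+O(1/i^{2})$ into the limiting recursion, the leading terms --- those of order $i^{1-\beta}$ coming from the degree-proportional entries of $\tilde l$ --- cancel automatically (balance of probability flux into and out of class $i$), and matching the terms of order $i^{-\beta}$ yields exactly $(1+\pdel)(\p^{\beta-1}+\p\beta-\p)=1+\beta\pdel$, i.e.\ equation (\ref{eq:pl}); the prefactor $C$ cancels. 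Hence $\log\bg_\tim(i)=\alpha-\beta\log i$ as $\tim\to\infty$, as claimed.

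The step I expect to be the main obstacle is the rigorous mean-field passage --- proving that $\baf_\tim(i)/\tim$ converges and, more delicately, that the limiting relation holds uniformly enough in $i$ to legitimately extract the tail exponent --- together with the saddle-point control of the new-node convolution term and the verification that the $O(1/i^{2})$ expansion remainders, and the $O(1/N_\tim^{2})$ cross terms discarded in the first step, do not contaminate the order-$i^{-\beta}$ balance that pins down $\beta$.
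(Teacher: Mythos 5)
Your proposal follows essentially the same route as the paper's own proof: the same conditional one-step recursion for $\E\{f_\tim(i)\}$, the same mean-field substitution $a_i=\lim_{\tim}\E\{f_\tim(i)\}/\tim$ with $N_\tim=\tim$, the same binomial-sum asymptotic $\sum_{j\ge i}(a_j/a_i)\binom{j}{i}\p^i(1-\p)^{j-i}=\p^{\beta-1}+O(1/i)$ (which the paper establishes via a negative-binomial series identity rather than your Laplace/concentration estimate), and the same $O(1/i)$ order-matching under the ansatz $a_i=Ci^{-\beta}$ to arrive at (\ref{eq:pl}). The one step you defer to citation --- existence of the limit $a_i$ --- is precisely the step the paper handles with its own (sketched) argument via the cumulative quantity $h_\tim(i)=\frac{1}{\tim}\sum_{k\le i}\E\{f_\tim(k)\}$ and the monotone supremum ratio $\omega(\tim)=\sup_i h_\tim(i)/s_i$, so your identification of that as the main remaining obstacle matches where the paper itself concentrates its effort.
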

\noindent{\em Remark 1. Outline of Proof}:
The proof of Theorem~\ref{theo:pl}, which is presented in
Appendix~\ref{ap:pl}, consists of two steps: (i) finding the power law
component and (ii) showing that the degree distribution converges to a
power law as $\tim \rightarrow \infty$. To find the power law component,
we derive a recursive equation for the number of nodes with degree
$i+1$ at time $\tim + 1 $, $\degree_{\tim+1}(i+1)$, in terms of degree of
nodes in graph $\graph_\tim$. Then, this recursive equation is rearranged
to equation for the power law component. To prove that the degree
distribution satisfies a power law, we define a new parameter $h_\tim(i)
= \frac{1}{\tim}\sum_{k=1}^{i}\E\{\degree_{\tim}(k)\}$ and we show that
$\lim_{\tim \rightarrow \infty} h_\tim(i) = \sum_{k=1}^iCk^{-
\beta}$ where $\beta$ is the power law component computed by the solving
the recursive equation.
Theorem~\ref{theo:pl} asserts that the infinite duplication-deletion random graph without Markovian dynamics generated by Algorithm~\ref{alg0} satisfies a power
law and provides an expression for the power law component. The
significance of this theorem is that it ensures that with use of one
single parameter (the power law component), we can describe the degree
distribution of large numbers of nodes in graphs that model social
networks.

\begin{figure}[h]
\centering
\hspace{-1.2cm}\scalebox{.9}{\includegraphics{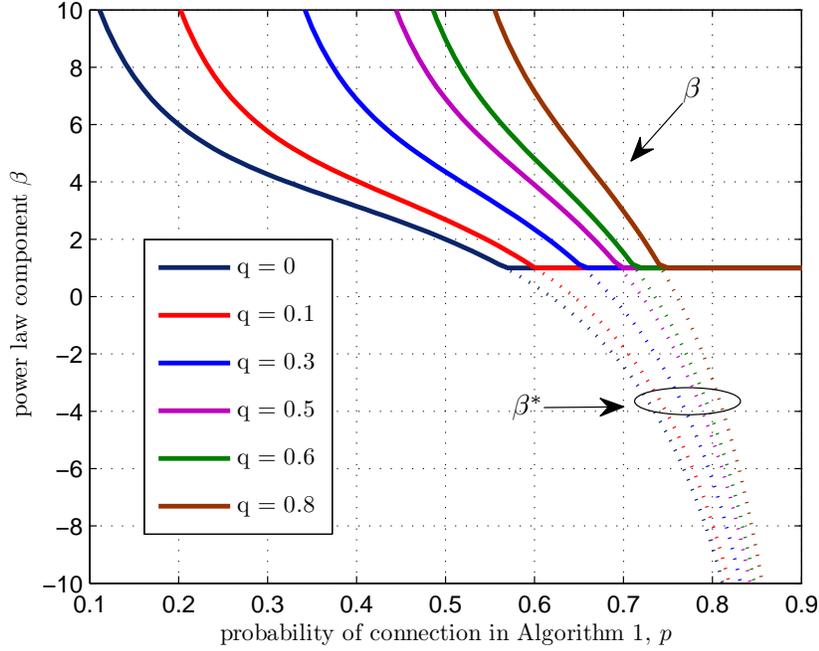}}
\caption{The power law component for the non-Markovian random graph
generated according to Algorithm~\ref{alg0} obtained by (\ref{eq:pl}) for
different values of $\p$ and $\pdel$  in Algorithm~\ref{alg0}. }
\label{beta}
\end{figure}

\noindent{\em Remark 2. Power Law Component}: Let $\beta^*$ denote the
solution of (\ref{eq:pl}). Then the power law component is defined as  $\beta =\max
\{1,\beta^*\}$. Fig.\ref{beta} shows the the power law component and
$\beta^*$  versus $\p$ for different values of probability of deletion,
$\pdel$. As can be seen in Fig.\ref{beta}, the power law component is
increasing in $\pdel$ and decreasing in $\p$.

\section{Numerical Examples}\label{sec:num}
In this section, numerical examples are given to illustrate the results
from  Sec.\ref{sec:markov}, Sec.\ref{sec:pmf}, and Sec.\ref{subsec:pl}. \\
The main conclusions are:\begin{enumerate}[(i)]
\item The infinite duplication-deletion random graph without Markovian dynamics generated by Algorithm~\ref{alg0} satisfies
a power law as stated in Theorem~\ref{theo:pl}. This is illustrated in Example 1 below.
\item The degree distribution of the fixed size duplication-deletion random graph generated by Algorithm~\ref{alg0} can be computed from Theorem~\ref{theo:mu}. When $N_0$ (the size of the random graph) is sufficiently large, numerical results show that the degree distribution satisfies a power law as well. This is shown in Example 2 below.
\item The estimates obtained by stochastic approximation algorithm
(\ref{eq15}) follow the expected probability distribution precisely
without information about the Markovian dynamics. This is illustrated in
Example 3 below.
\item The larger the trace of the asymptotic covariance of the scaled
tracking error, the greater the average degree of nodes and the
searchability of the graph. This is illustrated in Example 4 below.
    \end{enumerate}

 {\em Example 1}: Consider an infinite duplication-deletion random graph without Markovian dynamics generated by Algorithm~\ref{alg0} with $\p = 0.5$ and $\pdel = 0.1$. Theorem~\ref{theo:pl} implies that the degree sequence of the resulting graph satisfies a power law with exponent computed using (\ref{rec}). Fig.\ref{sim2} shows the number of nodes with specific degree on a logarithmic scale for both horizontal and vertical axes. It can be inferred from the linearity in  Fig.\ref{sim2} (excluding the nodes with very small degree), that the resulting graph from duplication-deletion process satisfies a power law. As can be seen in Fig.\ref{sim2}, the power law is a better approximation for the middle points compared to both ends.
\begin{figure}[h]
\centering
\hspace{-1.2cm}\scalebox{.8}{\includegraphics{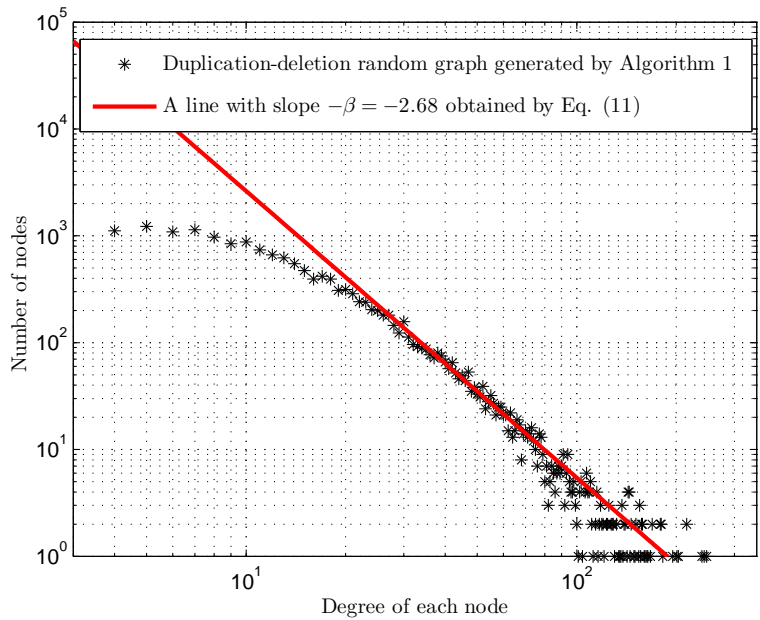}}
\caption{Illustration of Theorem~\ref{theo:pl}: The degree distribution
of the duplication-deletion random graph satisfies a power law. The
parameters are specified in Example 1 of Sec.\ref{sec:num}. }
\label{sim2}
\end{figure}

{\em Example 2}: Consider the fixed size duplication-deletion random graph obtained by Algorithm~\ref{alg0} with $\pdup = 0$, $N_0 = 10$, $p = 0.4$, and $q = 0.1$. (We consider no Markovian dynamics here to illustrate Theorem~\ref{theo:mu}.) Fig.~\ref{mc} depicts the degree distribution of the fixed size duplication-deletion random graph obtained by Theorem~\ref{theo:mu}. As can be seen in Fig.~\ref{mc}, the computed degree distribution is close to that obtained by simulation. The numerical results show that the degree distribution of the fixed size random graph also satisfies a power law for some values of $p$ when the size of random graph is sufficiently large. Fig.~\ref{mc2} shows the number of nodes with specific degree for the fixed size random graph obtained by Algorithm~\ref{alg0} with $\pdup = 0$, $N_0 = 1000$, $p = 0.4$, and $q = 0.1$ on a logarithmic scale for both horizontal and vertical axes.

\begin{figure}[h]
\begin{minipage}[b]{.5\textwidth}
\centering
\hspace{-1.2cm}\scalebox{.6}{\includegraphics{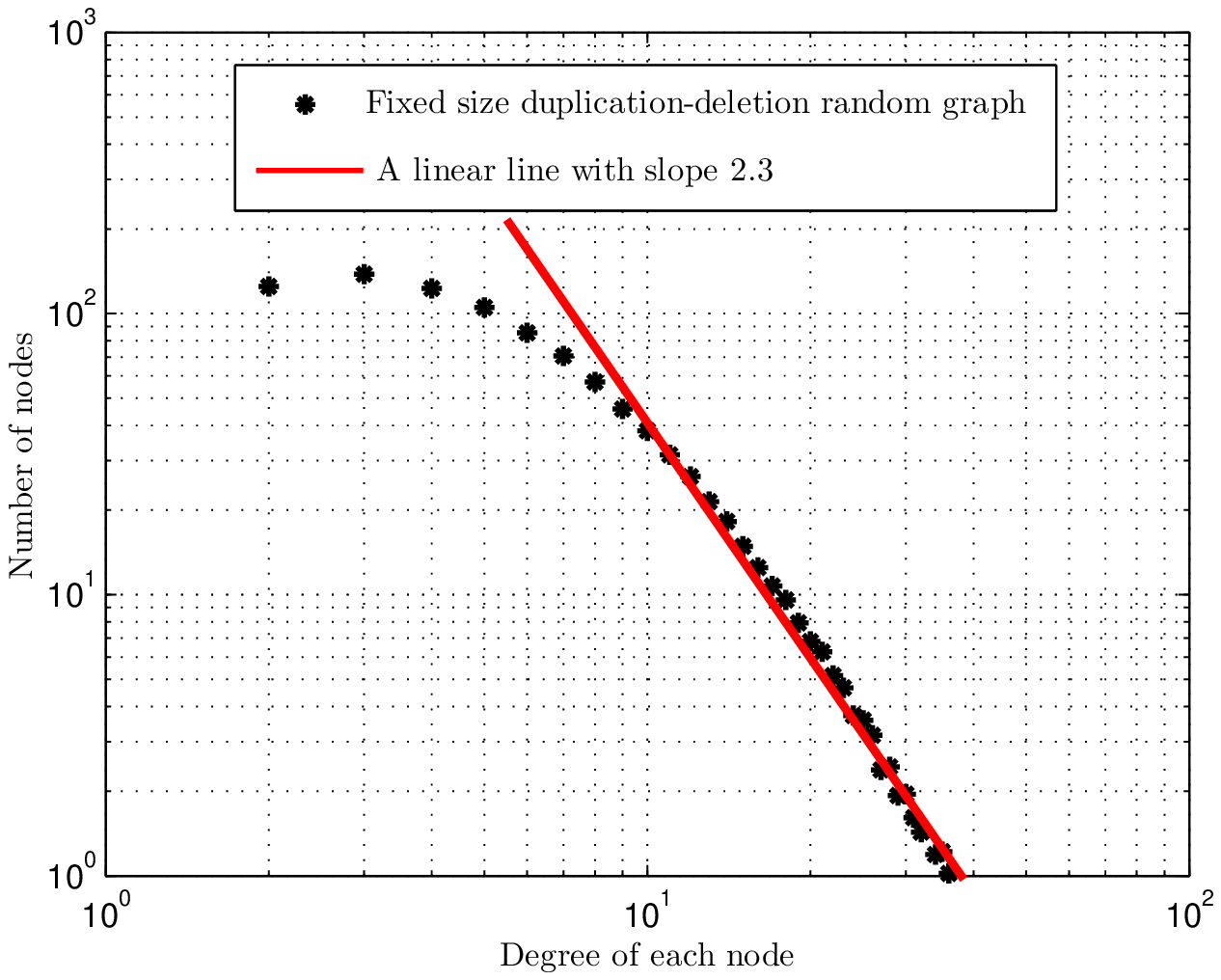}}
\caption{ The degree distribution
of the fixed size duplication-deletion random graph satisfies a power law when $N_0$ is sufficiently large. The
parameters are specified in Example 2 of Sec.\ref{sec:num}. }
\label{mc2}
\end{minipage}
\hspace{.5cm}
\begin{minipage}[b]{.5\textwidth}
\centering
\hspace{-1.2cm}\scalebox{.6}{\includegraphics{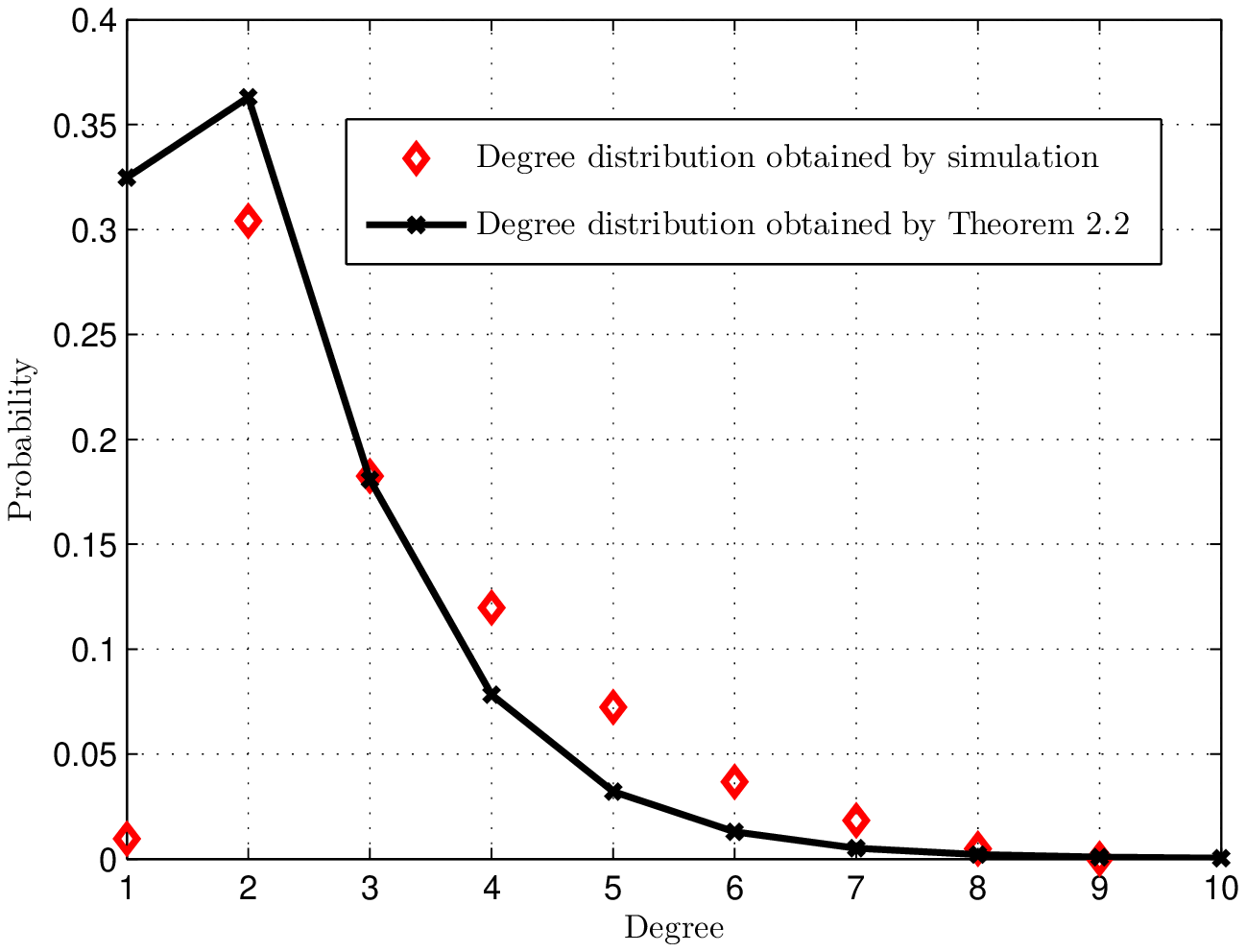}}
\caption{Illustration of Theorem~\ref{theo:mu}: The degree distribution
of the fixed size duplication-deletion random graph. The
parameters are specified in Example 2 of Sec.\ref{sec:num}. }
\label{mc}
\end{minipage}
\end{figure}

 {\em Example 3}: Consider the fixed size Markov-modulated duplication-deletion random graph
generated by Algorithm \ref{alg0} with $\pdup = 0$ and $N_0 = 500$. Assume that the underlying Markov chain has three states,
$M=3$. We choose the following values for probabilities of connection and
deletion: state (1): $p = q = 0.05$, state (2): $p = 0.2$ and $q = 0.1$,
and state (3): $p = 0.4$, $q = 0.15$.  The sample path of the Markov chain jumps at times $n=3000$ from state (1) to state (2) and $n=6000$ from state (2) to state (3). As the state of
the Markov chain changes, the expected degree distribution, $\bg(\mc)$,
obtained by (\ref{eq:gbar}) evolves over time. The corresponding values
for the expected degree distribution (for $i=3$) are shown in
Fig.\ref{samplepath} by a dotted line. The estimated probability mass
function, $\hg_\tim$, obtained by the stochastic approximation algorithm (\ref{eq:sa}) is plotted in
Fig.\ref{samplepath} using a solid line. The figure shows that the
estimates obtained by the stochastic approximation algorithm (\ref{eq15})
follow the expected degree distribution obtained by (\ref{eq:gbar})
precisely without any information about the Markovian dynamics.

 {\em Example 4}: Consider the fixed size Markov-modulated duplication-deletion random graph
obtained by Algorithm \ref{alg0} with $M = 91$ and $\pdup =0$ and $N_0 = 1000$. For each value of $\p(\mc) = 0.04 + \mc \times 0.01,
\mc \in \{1,2,\ldots, 91\}$ and $\pdel \in\{0.05, 0.1, 0.15, 0.2\}$, we
compute $\transition(\mc)$ from (\ref{eq:L}) and consequently the
stationary distribution, $\bg(\mc)$, from (\ref{eq:gbar}).  As expected,
the stationary distribution does not depend on $\pdel$ because only the deletion step in Algorithm \ref{alg0} occurs
with probability $\pdel$. From $\bg(\mc)$, we  compute the average degree
of nodes, $\de$. Fig.\ref{cov3} shows the average degree of nodes versus
the probability of the connection in Algorithm~\ref{alg0}. As can be seen
in Fig.\ref{cov3}, with increasing the probability of connection in
Algorithm \ref{alg0}, the average degree of nodes in the graph (which is
a measure for the connectivity of the graph, see \cite{complex})
increases.
\begin{figure}[h]
\begin{minipage}[b]{.5\textwidth}
\centering
\hspace{-1.2cm}\scalebox{.6}{\includegraphics{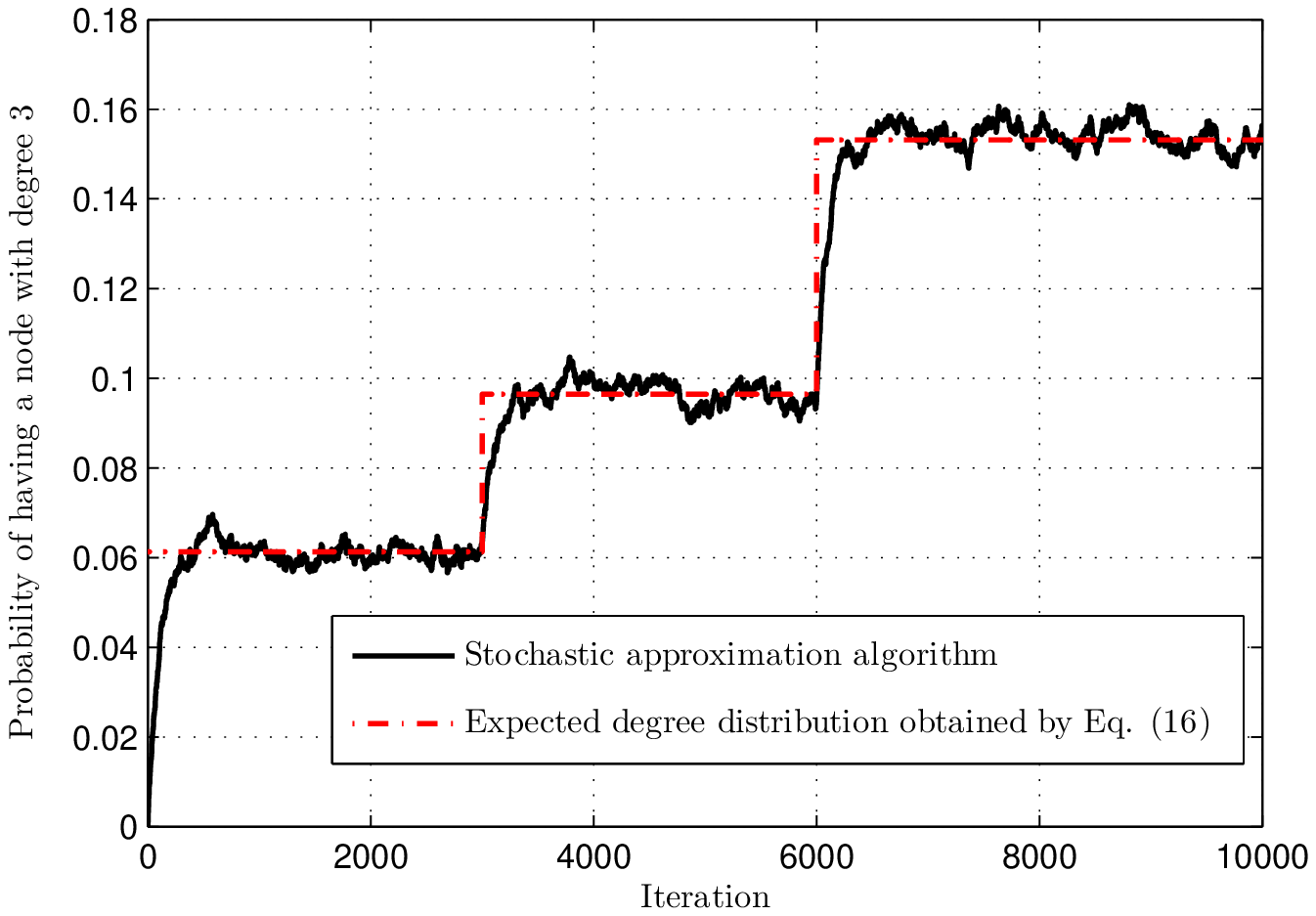}}
\caption{Illustration of Theorem~\ref{theo2}: The estimated probability
mass function obtained by the stochastic approximation algorithm
(\ref{eq15}) follows the expected probability distribution precisely
without information about the Markovian dynamics. The parameters are
specified in Example 4 of Sec.\ref{sec:num}.}
\label{samplepath}
\end{minipage}
\hspace{.5cm}
\begin{minipage}[b]{.5\textwidth}
\centering
\hspace{-1.2cm}\scalebox{.6}{\includegraphics{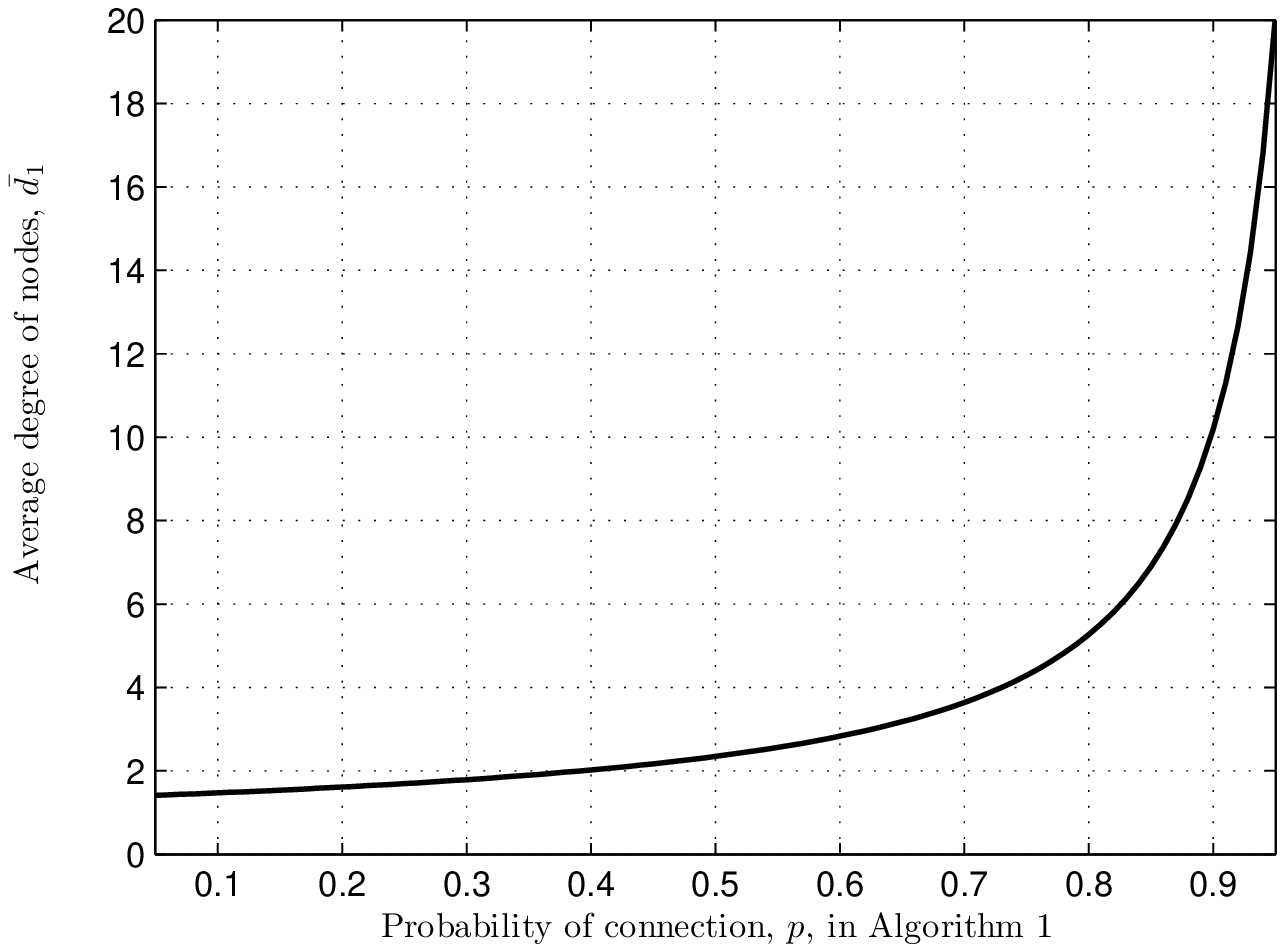}}
\caption{The average degree of nodes (as a measure of connectivity) of
the fixed size Markov-modulated duplication-deletion random graph obtained by Algorithm
\ref{alg0} for different values of the probability of connection, $\p$,
in Algorithm \ref{alg0}. The parameters are specified in Example 4 of
Sec.\ref{sec:num}. }
\label{cov3}
\end{minipage}
\end{figure}

Then for  each value of $\p(\mc) = 0.04 + \mc \times 0.01,  \mc \in
\{1,2,\ldots, 91\}$ and $\pdel \in\{0.05, 0.1, 0.15, 0.2\}$, the
covariance matrix is computed using (\ref{cov3}). Fig.\ref{cov1} depicts
the trace of the covariance matrix, $\trace\left(\cov(\mc)\right)$, for
each value of $\p$ and $\pdel$ versus the corresponding average degree of
nodes (for each value of $\p$). As can be seen in Fig.\ref{cov1}, the
trace of the covariance matrix is larger when the average degree of nodes
is higher (the graph is highly connected).

Recall from Lemma \ref{lem:search}, the order of delay in the searching
problem can be computed by $\delay(N_0) = O\left(\frac{N_0\de}{\dd
- \de}\right)$. Knowing the degree distribution $\bg(\mc)$, $\de$ and
$\dd$ can be computed for each value of $\p \in \{0.05, 0.06,\ldots,
0.95\}$. Fig.\ref{cov2} shows the trace of the covariance matrix versus
$\left(\frac{\de}{\dd-\de}\right)$  as a measure of the searchability for
each value of $\pdel \in\{0.05, 0.1, 0.15, 0.2\}$. As can be seen in
Fig.\ref{cov2}, the trace of covariance matrix is larger when the order
of delay in the search problem in (\ref{eq:delay}) is smaller\footnote{This means that the target node can be found in the search problem with smaller number of steps.}.

\begin{figure}[h]
\begin{minipage}[b]{.5\textwidth}
\centering
\hspace{-1.2cm}\scalebox{.62}{\includegraphics{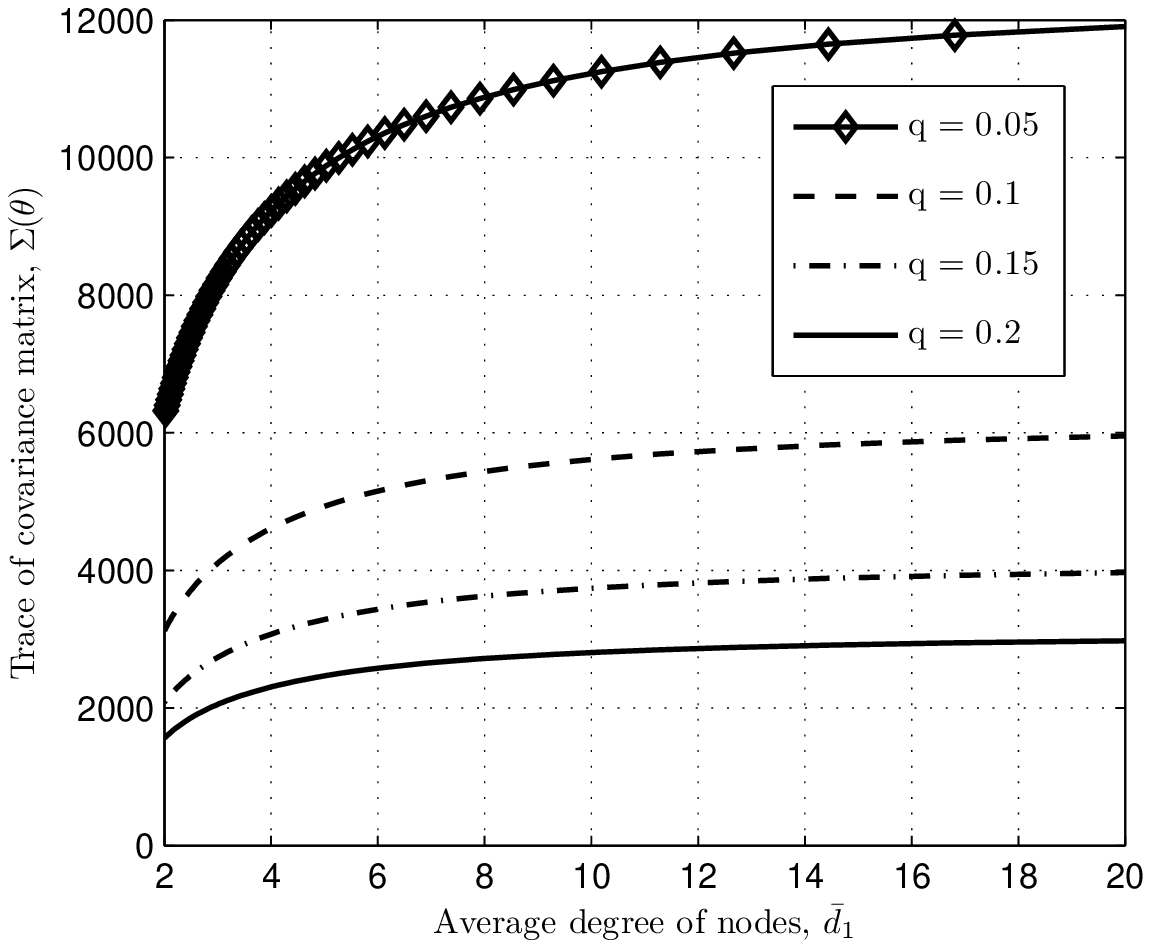}}
\caption{The trace of the covariance matrix of the scaled tracking error,
$\trace\left(\cov(\mc)\right)$, versus the average degree of nodes as a
measure of connectivity of the network. The parameters are specified in
Example 3 of Sec.\ref{sec:num}. }
\label{cov1}
\end{minipage}
\hspace{.5cm}
\begin{minipage}[b]{.5\textwidth}
\centering
\hspace{-1.2cm}\scalebox{.62}{\includegraphics{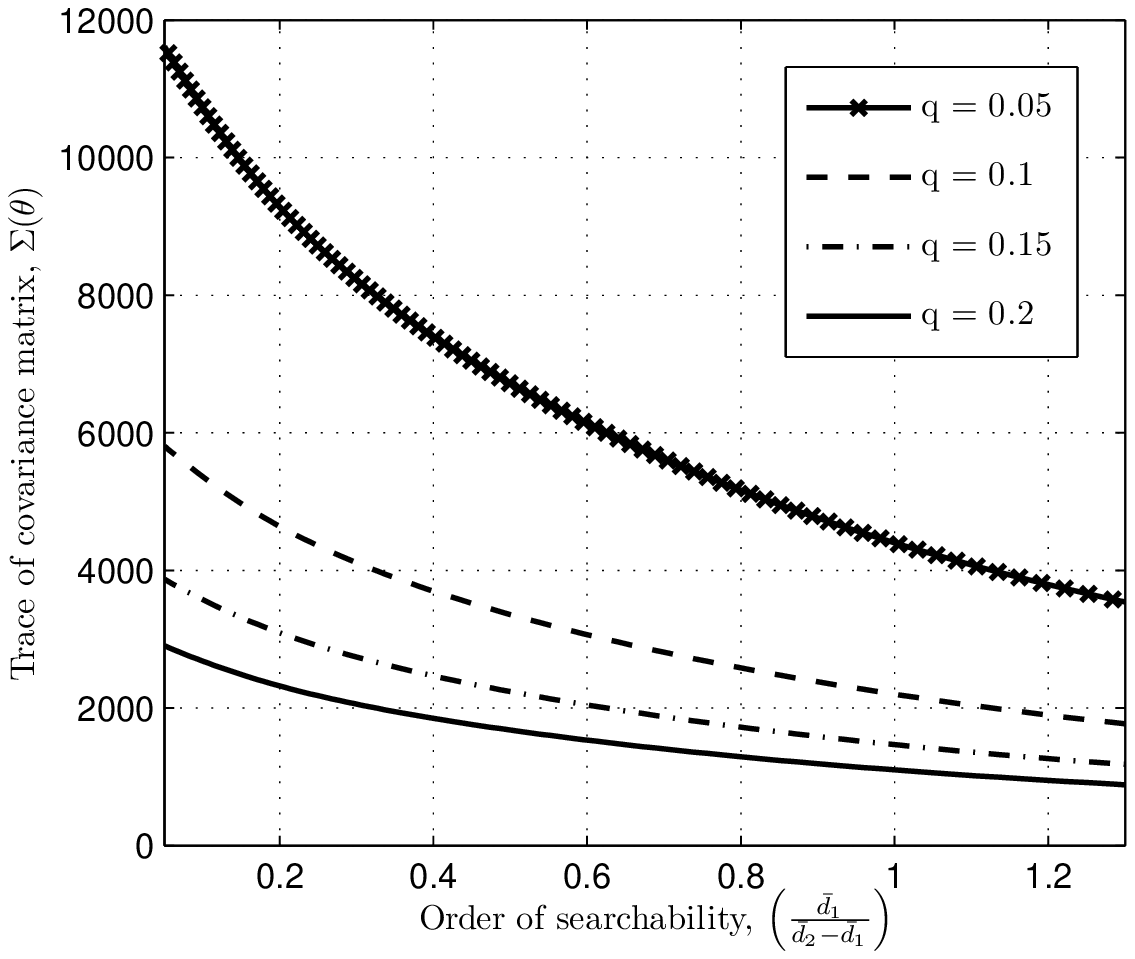}}
\caption{The trace of the covariance matrix of the scaled tracking error,
$\trace\left(\cov(\mc)\right)$, versus the order of delay in the
searching problem as a measure of searchability of the network. The
parameters are specified in Example 3 of Sec.\ref{sec:num}.  }
\label{cov2}
\end{minipage}
\end{figure}
\section{Conclusion}\label{sect7}
This paper analyzed the dynamics of a duplication-deletion graph where at
each time instant, one node can either join or leave the graph (An
extension to the duplication model of~\cite{duplication,bebek25}). The
power law component for such graph was computed using the result of
Theorem~\ref{theo:pl}. Also a Markov-modulated random graph was proposed
to model the social networks whose evolution changes over time. Using the
stochastic approximation algorithms, the probability mass function of
degree of each node is estimated. Then, an upper bound was derived for
the distance between the estimated and the expected PMF. As a result of
this bound, we showed that the scaled tracking error between the expected
PMF and the estimated one weakly converges to a diffusion process. From
that, the covariance of this error can be computed.  Finally, we
presented a discussion on application of this work in controlling a
social network using the degree distribution obtained by stochastic
approximation. In this case it is assumed that the network manager
observes the degree of active users and this observation is noisy due to
the activity profile of users. Using the estimated degree distribution,
the network manager can track the level of connectivity (by computing the
orders of size of giant component) and the searchability of the network
(by computing the order of delay).

\appendix
\subsection{Proof of Theorem~\ref{theo:mu}}\label{ap:mu}
\begin{proof}
To find the degree distribution of nodes, we find a relation between the number of nodes
with specific degree at time $\tim$ and the degree distribution of the
graph at time $\tim-1$. Given the resulting graph at time $\tim$, we are
trying to find the expected number of nodes with degree $i+1$ at time
$\tim+1$. The following events can occur that result in a node with
degree $i+1$ at time $\tim+1$:
\begin{itemize}
\item{A node with degree $i$ is chosen at the duplication step as a
parent node. In this case, there will be another edge connecting the new
node to the parent node in the edge-duplication step. Probability of
choosing a node with degree $i$ is $\frac{\degree_\tim(i)}{N_\tim}$. If a
node with degree $i$ is not chosen itself but one of its neighbors is
selected as parent node, there is also a chance for this node to have
another edge (with probability of $\p$). This node has $i$ neighbors
therefore, the corresponding probability is $\p.i$. So the probability
that the degree of such node increases by $1$ in the duplication step is
$\frac{1+pi}{N_\tim}$. Also, in the deletion step nor this node neither any
its neighbors should be selected in the edge-deletion step. With the same
discussion, the associated probability is $\left(1-
\frac{\pdel(i+1)}{N_\tim}\right)$. If deletion step occurs, another node is
generated and connected to the graph as described in deletion-step in
Sec.\ref{sec:intro}. Nor this node (the node with degree $i$) and none of
its neighbors should be selected in this step. So the probability that
this node remains unchanged after deletion step is:  $\left(1-
\frac{\pdel(i+1)+\pdel (1+pi)}{N_\tim}\right)$ }
\item{A node with degree $i+1$ at time $\tim$ does not change during
duplication and deletion processes. To be unchanged in both duplication
and deletion steps, this node or any of its neighbors should not be
chosen in both duplication and deletion steps. The probability of being
unchanged during these processes for an specific node can be computed
from $\left(1-
\frac{\pdel(i+2)+\pdel\big(1+\p(i+1)\big)}{N_\tim}\right)\left(1-
\frac{\p(i+1)+1}{N_\tim}\right)$} and total number of such nodes at time
$\tim$ is $f_\tim(i+1)$.
    \item{The degree of the most recently generated node (in the vertex-
duplication) increases to $i+1$ in the edge-duplication step. This means
that, this node is connected to $``i"$ neighbors of the parent node and
remains unchanged in the deletion step. The probability of this scenario
is
        $\\ \left(1-\frac{\pdel(i+2) +
\pdel\big(1+\p(i+1)\big) }{N_\tim}\right)\sum_{j\geq
i}\frac{1}{N_\tim}f_\tim(j){{j}\choose{i}}\p^i(1-\p)^{j-i}$.}
\item{A node with degree $i+2$ remains unchanged in the duplication step
and one of its neighbors is eliminated in the deletion step. The
probability of this event is  $\pdel\left(\frac{i+2}{N_\tim}\right)\left(1-
\frac{\p(i+2)+1}{N_\tim}\right)$.}
\item{The degree of the node generated in the deletion-step increases to
$i+1$ (As described in Sec.\ref{sec:intro}, in deletion-step to maintain
the total number of nodes, a new node is generated and connected to the
graph). The probability of this scenario is $ \pdel\sum_{j\geq
i}\frac{1}{N_\tim}f_\tim(j){{j}\choose{i}}\p^i(1-\p)^{j-i}$.}
\item{A node with degree $i$ remains unchanged in the duplication step
and the same node or one of its neighbors selected in the duplication
part of the deletion step. The corresponding probability is
$\frac{\pdel(1 + pi)}{N_\tim}\left(1 - \frac{1+pi}{N_\tim}\right)$ }
\item{The degree of a node with $i+1$ neighbors increases in the
duplication step and one of its neighbors is eliminated in the deletion
step. The corresponding probability is
$\pdel\left(\frac{i+2}{N_\tim}\right)\left(\frac{\p((i+1)+1)}{N_\tim}\right)$
.}
\end{itemize}
 Let $\Omega$ denote the set of all arbitrary graphs and
$\mathcal{F}_\tim$ denote the sigma algebra generated by graphs
$\graph_\tau, \tau \leq \tim$. Considering the above events that result
in a node with degree $i+1$ at time $\tim+1$, the following recurrence
formula can be derived for the conditional expectation of
$f_{\tim+1}(i+1)$:
\begin{align}
\label{eq1}
\mathbf{E}\{f_{\tim+1}(i+1)|\mathcal{F}_\tim\}&=\left(1-
\pdel\frac{(i+2)+ (1+ \p(i+1))}{N_\tim}\right)\left(1-
\frac{\p(i+1)+1}{N_\tim}\right)f_\tim(i+1)\nonumber\\
&\ +\left(1-
\pdel\frac{(i+1) + (1
+pi)}{N_\tim}\right)\left(\frac{1+pi}{N_\tim}\right)f_\tim(i)\nonumber\\
& \ +\left(1-\pdel\frac{(i+2) + (1 +\p(i+1))}{N_\tim}\right)\sum_{j\geq
i}\frac{1}{N_\tim}f_\tim(j){{j}\choose{i}}\p^i(1-\p)^{j-i}\nonumber\\
& \ +\pdel\sum_{j\geq i}\frac{1}{N_\tim}f_\tim(j){{j}\choose{i}}\p^i(1-\p)^{j-
i}\nonumber\\
& \ +\pdel\left(\frac{i+2}{N_\tim}\right)\left(1-\frac{\p(i+2)+1}{N_\tim}\right)
f_\tim(i+2)\nonumber\\
& \ +\frac{\pdel(1 + pi)}{N_\tim}\left(1 -
\frac{1+pi}{N_\tim}\right)f_\tim(i)\nonumber\\
& \ +\pdel\left(\frac{i+2}{N_\tim}\right)\left(\frac{\p((i+1)+1)}{N_\tim}\right)
f_\tim(i+1).
\end{align}
 Let $\baf^\mc_\tim(i)= \mathbf{E}\{f_\tim(i)|\mc_\tim = \mc\}$. By taking expectation of
both sides of (\ref{eq1}) with respect to trivial sigma algebra
$\{\Omega,\emptyset\}$, the smoothing property of conditional
expectations yields.
\begin{align}
\label{eq:1}
\baf^\mc_{\tim+1}(i+1)&=\left(1- \pdel\frac{(i+2)+ (1+
\p(i+1))}{N_\tim}\right)\left(1-
\frac{\p(i+1)+1}{N_\tim}\right)\baf^\mc_\tim(i+1)\nonumber\\\
& \ + \left(1-
\pdel\frac{(i+1) + (1
+pi)}{N_\tim}\right)\left(\frac{1+pi}{N_\tim}\right)\baf^\mc_\tim(i)\nonumber\\
&\ +\left(1-\pdel\frac{(i+2) + (1 +\p(i+1))}{N_\tim}\right)\sum_{j\geq
i}\frac{1}{N_\tim}\baf^\mc_\tim(j){{j}\choose{i}}\p^i(1-\p)^{j-i}\nonumber\\
&\  +\pdel\sum_{j\geq i}\frac{1}{N_\tim}\baf^\mc_\tim(j){{j}\choose{i}}\p^i(1-
\p)^{j-i}\nonumber\\
 &\ +\pdel\left(\frac{i+2}{N_\tim}\right)\left(1-\frac{\p(i+2)+1}{N_\tim}\right)
\baf^\mc_\tim(i+2)\nonumber\\
&\ +\frac{\pdel(1 + pi)}{N_\tim}\left(1 -
\frac{1+pi}{N_\tim}\right)\baf^\mc_\tim(i)\nonumber\\
&\ +\pdel\left(\frac{i+2}{N_\tim}\right)\left(\frac{\p((i+1)+1)}{N_\tim}\right)
\baf^\mc_\tim(i+1).
\end{align}
Assuming that size of the graph is sufficiently large,  each term like
$\frac{\baf_\tim(i')}{N_\tim^2}$ can be neglected for large $N_\tim$. So (\ref{eq:1}) can be re-written as
\begin{align}
\label{eq:barf}
\baf^{\mc}_{\tim+1}(i+1)& = \left(1- \frac{\pdel(\mc)(i+2)+
\pdel(\mc)\big(\p(\mc)(i+1)+1\big)}{N_\tim}\right)\baf^{\mc}_\tim(i+1)
\nonumber\\
&\ +\left(\frac{(1+\p(\mc)
i)\pdel(\mc)}{N_\tim}\right)\baf^{\mc}_\tim(i)+\pdel(\mc)\left(\frac{i+2}
{N_\tim}\right)\baf^{\mc}_\tim(i+2)\nonumber\\
&\ + \pdel(\mc)\sum_{j\geq
i}\frac{1}{N_\tim}\baf^{\mc}_\tim(\mc,j){{j}\choose{i}}\p(\mc_{\tim+1})^i
(1-\p(\mc_{\tim+1}))^{j-i}.
\end{align}

 Using (\ref{eq:1}), we can write the following recursion for the
$(i+1)$-th element of $\bg^{\mc}({\tim+1})$.
\begin{align}
\label{eq:barg}
\bg^{\mc}_{\tim+1}(i+1)& = \left(\frac{N_\tim-\left(
\pdel(\mc)(i+2)+
\pdel(\mc)\big(\p(\mc)(i+1)+1\big)\right)}{N_{\tim+1}}
\right)\bg^{\mc}_\tim(i+1)\nonumber\\
&\ +\left(\frac{(1+\p(\mc)
i)\pdel(\mc)}{N_{\tim+1}}\right)\bg^{\mc}_\tim(i)+\pdel(\mc)\left(\frac{i
+2}{N_{\tim+1}}\right)\bg^{\mc}_\tim(i+2)\nonumber\\
&\ +
\pdel(\mc)\sum_{j\geq
i}\frac{1}{N_{\tim+1}}\bg^{\mc}_\tim(j){{j}\choose{i}}\p(\mc)^i(1-
\p(\mc))^{j-i}.
\end{align}

Since the probability of duplication step $\pdup = 0$, the number of vertices does not increase. Thus, $N_n = N_0$ and (\ref{eq:barg}) can be written as
\begin{align}
\label{eq:barg2}
\bg^{\mc}_{\tim+1}(i+1)=& \Big(1 - \frac{1}{N_0}\left(\pdel(\mc)(i+2)+
\pdel(\mc)\big(\p(\mc)(i+1)+1\big)\right)\Big)\bg^{\mc}_\tim(i+1)
\nonumber\\
&+\frac{1}{N_0}\Big((1+\p(\mc)
i)\pdel(\mc)\bg^{\mc}_\tim(i)+\frac{1}{N_0}\pdel(\mc)(i+2)\bg^{\mc}_\tim(i+2)
\Big)\nonumber\\&+\frac{1}{N_0} \pdel(\mc)\sum_{j\geq
i}\bg^{\mc}_\tim(j){{j}\choose{i}}\p(\mc)^i(1-\p(\mc))^{j-i}.
\end{align}

From (\ref{eq:barg2}), it is clear that the vector
$\bg^{\mc}(\mc_{\tim+1})$ depends on elements of  $\bg^{\mc}(\mc)$. In a
matrix notation, (\ref{eq:barg2}) can be re-arranged as
\beq
\label{eq:true}
\bg^{\mc}_{\tim+1} = (I + \frac{1}{N_0} \transition(\mc))\bg^{\mc}_\tim,
\eeq
where $\transition(\mc_\tim)$ is defined as (\ref{eq:L}).

To prove that $\transition(\mc_\tim)$ is a generator, we need to show
that $l_{ii} < 0$ and $\sum_{i = 1}^{N_0}l_{ki} = 0$.
\begin{align}\label{temp}
\sum_{i = 1}^{N_0}l_{ki} &= -\left(\pdel(\mc_\tim)(k+1) +
\pdel(\mc_\tim)(1 + \p(\mc_\tim)k)\right) + (1 +
\p(\mc_\tim)k)\pdel(\mc_\tim) \nonumber\\&+ \pdel(\mc_\tim)k +
\pdel(\mc_\tim)\sum_{k\leq i-1} {{k}\choose{i-1}} \p(\mc_\tim)^{i-1} (1-
\p(\mc_\tim))^{k-i+1}\nonumber\\
& = -\pdel(\mc_\tim) + \pdel(\mc_\tim)\sum_{k\leq i-1} {{k}\choose{i-1}}
\p(\mc_\tim)^{i-1} (1-\p(\mc_\tim))^{k-i+1}.
\end{align}

Let $m = i - 1$. (\ref{temp}) can be rewritten as
\begin{align}\label{temp2}
\sum_{i = 1}^{N_0}l_{ik} =& -\pdel(\mc_\tim) +
\pdel(\mc_\tim)\sum_{m=0}^k {{k}\choose{m}} \p(\mc_\tim)^{m} (1-
\p(\mc_\tim))^{k-m}\nonumber\\
=&  -\pdel(\mc_\tim) + \pdel(\mc_\tim)(1- \p(\mc_\tim))^k\sum_{m=0}^k
{{k}\choose{m}} \left(\frac{\p(\mc_\tim)}{1-\p(\mc_\tim)}\right)^{m}
\end{align}

We know that $\sum_{m=0}^k{{k}\choose{m}} a^m = \left(1 + a\right)^k$, so
(\ref{temp2}) can be written as
\begin{align}\label{temp3}
\sum_{i = 1}^{N_0}l_{ik} & =   -\pdel(\mc_\tim) +
\pdel(\mc_\tim)(1- \p(\mc_\tim))^k\left(\frac{1}{1 -
\p(\mc_\tim)}\right)^k\nonumber\\
& = 0.
\end{align}
Also it can be shown that if $\pdel(\mc_\tim) <
\frac{\p(\mc_\tim)(1-\p(\mc_\tim))}{2 + \p(\mc_\tim)}$, then $l_{ii} < 0$.
\end{proof}

\subsection{Proof of Theorem~\ref{theo:pl}}\label{ap:pl}
\begin{proof}
To prove Theorem~\ref{theo:pl}, we first compute the power law component,
$\beta$, and then we prove that the expected degree distribution
converges to the power law distribution with component $\beta$. Let $\baf_\tim(i) = \mathbf{E}\{\degree_\tim(i)\}$. Similar to (\ref{eq1}),
$\baf_\tim(\mc_\tim,i)$ can be written as
\begin{align}
\label{eq:11}
\baf_{\tim+1}(i+1)&=\left(1- \pdel\frac{(i+2)+ (1+
\p(i+1))}{N_\tim}\right)\left(1-
\frac{\p(i+1)+1}{N_\tim}\right)\baf_\tim(i+1)\nonumber\\
&\ + \left(1-
\pdel\frac{(i+1) + (1
+pi)}{N_\tim}\right)\left(\frac{1+pi}{N_\tim}\right)\baf_\tim(i)\nonumber\\
&\ +\left(1-\pdel\frac{(i+2) + (1 +\p(i+1))}{N_\tim}\right)\sum_{j\geq
i}\frac{1}{N_\tim}\baf_\tim(j){{j}\choose{i}}\p^i(1-\p)^{j-i}\nonumber\\
& \ +\pdel\sum_{j\geq i}\frac{1}{N_\tim}\baf_\tim(j){{j}\choose{i}}\p^i(1-
\p)^{j-i}\nonumber\\
& \ + \pdel\left(\frac{i+2}{N_\tim}\right)\left(1-\frac{\p(i+2)+1}{N_\tim}\right)
\baf_\tim(i+2)\nonumber\\
& \ +\frac{\pdel(1 + pi)}{N_\tim}\left(1 -
\frac{1+pi}{N_\tim}\right)\baf_\tim(i)\nonumber\\
& \ +\pdel\left(\frac{i+2}{N_\tim}\right)\left(\frac{\p((i+1)+1)}{N_\tim}\right)
\baf_\tim(i+1).
\end{align}

To compute the power law component, we can heuristically assume that
$\baf_\tim(i) = a_it$ as $N_\tim = \tim$ goes to infinity (we will prove this
precisely later on this section). Therefore, each term like
$\frac{\baf_\tim(i')}{N_\tim^2}$ can be neglected as $\tim$ approaches
infinity. So (\ref{eq:11}) can be re-written as
\begin{align}
\label{eq3}
\baf_{\tim+1}(i+1)=& \left(1- \frac{\pdel(i+2)+(1 +
\pdel)\big(\p(i+1)+1\big)}{N_\tim}\right)\baf_\tim(i+1)
+\left(\frac{(1+pi)(1+\pdel)}{N_\tim}\right)\baf_\tim(i)\nonumber\\&
+\pdel\left(\frac{i+2}{N_\tim}\right)\baf_\tim(i+2)+(1 + \pdel)\sum_{j\geq
i}\frac{1}{N_\tim}\baf_\tim(j){{j}\choose{i}}\p^i(1-\p)^{j-i}.
\end{align}
Substituting $\baf_{\tau}(j) = a_j\tau$ and $N_\tim = \tim$ in (\ref{eq3}) yields
\begin{align}
\label{eq444}
a_{i+1}(\tim+1) =& a_{i+1}\tim -
a_{i+1}\Big(\big(1+\p(i+1)\big)(1+\pdel)+\pdel(i+2)\Big)+(1+\pdel)(1+pi)a
_{i}+\pdel(i+2)a_{i+2}\nonumber\\
&+(1+\pdel)\sum_{j\geq i}a_j{j\choose i}\p^i(1-\p)^{j-i}.
\end{align}
Taking all terms with $a_{i+1}$ to the left hand side, we have
\begin{align}
\label{eq4}
a_{i+1}\Big(1+
(1+\pdel)\big(1+\p(i+1)\big)+\pdel(i+2))\Big)=&(1+\pdel)\left((1+pi)a_{i}
+\sum_{j\geq i}a_j{j\choose i}\p^i(1-\p)^{j-
i}\right)\nonumber\\&+\pdel(i+2)a_{i+2}.
\end{align}
Dividing both sides of (\ref{eq4}) by ${a_{i}}$ yields
\begin{align}
\label{eq4-2}
\frac{a_{i+1}}{a_i}\Big(1+(1+\pdel)\big(1+\p(i+1)\big)+\pdel(i+2))\Big)=&
(1+\pdel)\left((1+pi)+\sum_{j\geq i}\frac{a_{j}}{a_i}{j\choose i}\p^i(1-
\p)^{j-i}\right)\nonumber\\&+\pdel(i+2)\frac{a_{i+2}}{a_i}.
\end{align}
Solving Equation (\ref{eq4}) for $a_i$, we can complete the proof of
Theorem~\ref{theo:pl}
The following lemma whose proof can be found in \cite{duplication} is
used to solve the recurrence relation for $a_i$.

\begin{Lemma}
\label{lem1}
\begin{equation}
    \sum_{j\geq i}\frac{a_j}{a_i}{j\choose i}\p^i(1-\p)^{j-i}= \p
^{\beta-1}+ {\rm O}\left(\frac{1}{i}\right).
    \end{equation}
\end{Lemma}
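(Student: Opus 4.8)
\noindent\textbf{Proof proposal for Lemma~\ref{lem1}.}
The starting point is to read the weights $\binom{j}{i}\p^{i}(1-\p)^{j-i}$, $j\ge i$, as a rescaled probability law. The negative binomial theorem gives $\sum_{j\ge i}\binom{j}{i}\p^{i}(1-\p)^{j-i}=1/\p$, so $\p\,\binom{j}{i}\p^{i}(1-\p)^{j-i}$ is the pmf of a random variable $J_{i}=i+K_{i}$ with $K_{i}$ a negative binomial variable $\mathrm{NegBin}(i+1,\p)$. Hence
\[
\sum_{j\ge i}\frac{a_{j}}{a_{i}}\binom{j}{i}\p^{i}(1-\p)^{j-i}
=\frac{1}{\p}\,\E\!\Big[\frac{a_{J_{i}}}{a_{i}}\Big],
\]
and the task reduces to evaluating this expectation. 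The elementary moment identities $\E[J_{i}]=(i+1)/\p-1=i/\p+O(1)$ and $\mathrm{Var}(J_{i})=(i+1)(1-\p)/\p^{2}=O(i)$ show that $J_{i}/i\to 1/\p$ with fluctuations of relative size $O(i^{-1/2})$, so $J_{i}$ lives essentially on the window $j\asymp i/\p$.

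Next I would feed in the power-law shape of $a_{j}$. Assuming $a_{j}=Cj^{-\beta}(1+o(1))$ (justified by the bootstrap below), one has $a_{J_{i}}/a_{i}=(J_{i}/i)^{-\beta}(1+o(1))$ on the bulk event $\{\,|J_{i}-i/\p|\le i^{2/3}\,\}$, while its complement has probability $o(i^{-1})$ by a Chernoff/Bernstein bound for the negative binomial and contributes negligibly once $a_{J_{i}}/a_{i}$ is controlled there by a crude polynomial-ratio bound. On the bulk, a second-order Taylor expansion of $x\mapsto x^{-\beta}$ about $x_{0}=1/\p$ gives a linear term of expectation $-\beta x_{0}^{-\beta-1}(\E[J_{i}]/i-1/\p)=O(1/i)$, a quadratic term of size $O(\mathrm{Var}(J_{i})/i^{2})=O(1/i)$, and a lower-order remainder; hence $\E[a_{J_{i}}/a_{i}]=\p^{\beta}+O(1/i)$, and multiplying by $1/\p$ yields $\p^{\beta-1}+O(1/i)$, as claimed.

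The real difficulty is the circularity: Lemma~\ref{lem1} is invoked precisely to \emph{derive} the power law, so the ansatz $a_{j}=Cj^{-\beta}(1+o(1))$ is not available for free. I would break the loop with a two-pass bootstrap. In the first pass one uses only crude a priori facts about the recursion~(\ref{eq4}) --- positivity of the $a_{j}$, a polynomial bound on the ratios $a_{j}/a_{j-1}$, and the trivial tail bound $\sum_{j\ge i}(a_{j}/a_{i})\binom{j}{i}\p^{i}(1-\p)^{j-i}=O(1)$ coming from $\sum_{j}\binom{j}{i}\p^{i}(1-\p)^{j-i}=1/\p$ --- to show that this sum converges to a constant $\ell$; feeding $\ell$ into~(\ref{eq4-2}) and matching lower-order terms forces $a_{i+1}/a_{i}=1-\beta/i+o(1/i)$, with $\beta$ constrained (together with $\ell=\p^{\beta-1}$) by the self-consistency relation~(\ref{eq:pl}), hence $\log a_{i}=-\beta\log i+O(1)$ and, after one further sharpening, $a_{i}=Ci^{-\beta}(1+O(1/i))$. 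In the second pass one re-runs the expectation estimate above with this quantitative asymptotic, which upgrades the error from $o(1)$ to $O(1/i)$. The main obstacle throughout is controlling the uniformity of the power-law approximation of $a_{j}$ over the window $j\asymp i/\p$ while simultaneously extracting that same approximation from the recursion; this is exactly the technical core of the argument in~\cite{duplication}, whose bulk-plus-tail decomposition I would reuse rather than redo.
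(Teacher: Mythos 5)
Your argument is correct, but it proceeds quite differently from the paper's. The paper's proof (Appendix~C, following \cite{duplication}) is a short exact computation: it substitutes the ansatz $a_j/a_i=(i/j)^{\beta}$ directly, replaces $(i/j)^{\beta}\binom{j}{j-i}$ by the generalized binomial coefficient $\binom{j-\beta}{j-i}$ at the cost of a multiplicative $1+O(1/i)$ (since $\prod_{l=i+1}^{j}(1-\beta/l)=(i/j)^{\beta}(1+O(1/i))$), and then sums the resulting series in closed form via the negative binomial identity $\sum_{k\ge 0}\binom{k+i-\beta}{k}(1-\p)^{k}=\p^{-(i-\beta+1)}$, yielding $\p^{i}\cdot\p^{\beta-i-1}=\p^{\beta-1}$ exactly. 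You instead normalize the weights into the law of a negative binomial random variable $J_i$ (the normalization $1/\p$ and the moments you quote are correct), and obtain the constant $\p^{\beta-1}$ from concentration of $J_i/i$ about $1/\p$ plus a second-order Taylor expansion of $x^{-\beta}$; your tail control is actually easier than you suggest, since $(J_i/i)^{-\beta}\le 1$ on $J_i\ge i$ under the ansatz. The trade-off: the paper's route gives the $O(1/i)$ error in two lines but is tied to the exact power-law form, whereas your probabilistic route is more robust to perturbations of the ansatz and makes the location of the dominant mass ($j\asymp i/\p$) transparent. Your concern about circularity is legitimate but is not an obstacle you need to resolve inside this lemma: the paper states the lemma conditionally on the assumption $a_i=Ci^{-\beta}$ (made explicitly just before the lemma is invoked in Appendix~B), and defers the rigorous convergence to the power law to the separate argument via $h_\tim(i)$ and Lemma~B.2; your proposed bootstrap is therefore extra machinery duplicating that separate step rather than something the lemma itself requires.
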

\begin{proof} The proof is presented in Appendix~\ref{ap:lem1}.\end{proof}
To solve (\ref{eq4}) for $a_i$, we can further assume that $a_i = Ci^{-
\beta}$ \cite{complex}. Therefore, $\frac{a_{i+\alpha}}{a_i} =
\left(\frac{i+\alpha}{i}\right)^{-\beta}$
\begin{align}
\left(1-\frac{\beta}{i}\right)\Big(1 +
(1+\pdel)\big(1+\p(i+1)\big)+\pdel(i+2)\Big) =& (1+\pdel)(1 + pi +\p
^{\beta-1})\nonumber\\ &+ {\rm O}\left(\frac{1}{i}\right)
+\pdel(i+2)\left(1-\frac{2\beta}{i}\right).
\end{align}

Neglecting the ${\rm O}\left(\frac{1}{i}\right)$ terms, yields

\begin{align}
\label{rec}
(1+\pdel)( \p^{\beta - 1 }+\p\beta - \p ) =1 + \beta \pdel.
\end{align}

Note that the proof presented above depends on few assumptions. To give a
rigorous proof, the succeeding steps should be followed as described in
\cite{complex}:
\begin{itemize}
\item First, we need to show that the limit
$\lim_{\tim\rightarrow\infty}\frac{1}{\tim}\mathbf{E}\left\{f_\tim(i)
\right\} $ exists.
\item Let $a_i$ be the solution of (\ref{eq4}) such that
$\sum_{i=1}^{\infty}a_i = 1$ and $a_0 = 0$, then it is needed to show
that
\begin{equation}\label{eq58}
\lim_{\tim\rightarrow\infty}\frac{1}{\tim}\mathbf{E}\left\{f_\tim(i)
\right\}  =a_i.
\end{equation}
\item Finally, we should show that $a_i$ is proportional to $i^{-\beta}$,
where $\beta$ is the root of (\ref{rec}).
\end{itemize}

To complete the proof we define new function as follows $h_\tim(i) =
\frac{1}{\tim}\sum_{k=1}^{i}\mathbf{E}\{f_\tim(k)\}$ which can be
described as CDF of degree of each node in random graph. It is sufficient
to show that for all $i > 0$, \begin{equation} \label{eq59}
\lim_{\tim\rightarrow\infty}h_\tim(i) = \sum_{k=1}^ia_k
\end{equation}
where $a_i$ is the solution of (\ref{eq4}). It is obvious if (\ref{eq59})
holds, $h_\tim(i) - h_\tim(i-1) = a_i$ and thus
$$\lim_{\tim\rightarrow\infty}\frac{1}{\tim}\mathbf{E}\left\{f_\tim(i)
\right\}  =a_i$$ (as presented in (\ref{eq58})). The following lemma gives a
recurrence formula  to compute the value of $h(\tim+1,i)$.

\begin{Lemma}\label{lem6}
\begin{equation}
\label{lemm6}
h_{\tim+1}(i) = D_{\tim+1}(i)h_\tim(i) + B_{\tim+1}(i)h_\tim(i-1) +
C_{\tim+1}(i)h_\tim(i+1) + \frac{1+\pdel}{{\tim+1}}\sum_{j \geq i-
1}h_\tim(j)F(j,i-1,\p),
\end{equation}
where
\begin{eqnarray}
D_{\tim+1}(i) &=& \left(\frac{\tim -
\Big(\pdel(i+2)+(1+\pdel)\big(pi+1\big)\Big)}{{\tim+1}}\right),\nonumber\\
B_{\tim+1}(i) &=& \frac{(1+\pdel)(1 + pi)}{{\tim+1}},\nonumber\\
C_{\tim+1}(i) &=& \frac{\pdel(i+1)}{{\tim+1}},\nonumber\\
F(j,i,\p) &=& \sum_{k=0}^{i}{j\choose k}\p^k(1-\p)^{j-k} -
\sum_{k=0}^{i}{j+1\choose k}\p^k(1-\p)^{j+1-k}\nonumber.
\end{eqnarray}
\\
\end{Lemma}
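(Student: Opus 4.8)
The plan is to obtain the recursion for $h_{\tim+1}$ by summing the degree-evolution recursion used in the proof of Theorem~\ref{theo:pl} and then rewriting every term using only the cumulative quantities $h_\tim(\cdot)$. I would start from the recursion (\ref{eq3}) for $\baf_\tim(i)=\E\{f_\tim(i)\}$ (equivalently, the exact recursion (\ref{eq:11}) after discarding the $O(1/\tim^2)$ terms, with $N_\tim=\tim$), and re-index it by $i\mapsto k-1$ so that it expresses $\baf_{\tim+1}(k)$ as a linear combination of $\baf_\tim(k-1),\baf_\tim(k),\baf_\tim(k+1)$ plus a binomial sum over $j\ge k-1$. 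Since $h_{\tim+1}(i)=\frac{1}{\tim+1}\sum_{k=1}^{i}\baf_{\tim+1}(k)$ by definition, I would then sum this recursion over $k=1,\dots,i$, divide by $\tim+1$, and repeatedly apply the two elementary identities $\sum_{k=1}^{i}\baf_\tim(k)=\tim\,h_\tim(i)$ and $\baf_\tim(k)=\tim\big(h_\tim(k)-h_\tim(k-1)\big)$, using $\baf_\tim(0)=0$ (valid since the graph has no isolated vertices).

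For the local part, write the coefficient of $\baf_\tim(k)$ in the re-indexed recursion as $1-c_k/\tim$, that of $\baf_\tim(k-1)$ as $b_k/\tim$, and that of $\baf_\tim(k+1)$ as $g_k/\tim$, where $c_k=\pdel(k+1)+(1+\pdel)(\p k+1)$, $b_k=(1+\pdel)(1+\p(k-1))$ and $g_k=\pdel(k+1)$. I would shift the summation index in $\sum_{k=1}^{i}b_k\baf_\tim(k-1)=\sum_{k=1}^{i-1}b_{k+1}\baf_\tim(k)$ and in $\sum_{k=1}^{i}g_k\baf_\tim(k+1)=\sum_{k=2}^{i+1}g_{k-1}\baf_\tim(k)$ so that all three sums run over a common index. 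The key algebraic fact is that for interior indices the net coefficient of $\baf_\tim(k)$ collapses to the constant $-c_k+b_{k+1}+g_{k-1}=-\pdel$; hence $\sum_k(-\pdel)\baf_\tim(k)$ telescopes back into $-\pdel\,\tim\,h_\tim(i)$, and the only non-telescoping contributions come from the endpoints $k=i$ and $k=i+1$. Writing $\baf_\tim(i)=\tim(h_\tim(i)-h_\tim(i-1))$ and $\baf_\tim(i+1)=\tim(h_\tim(i+1)-h_\tim(i))$ then produces exactly the coefficients $D_{\tim+1}(i),B_{\tim+1}(i),C_{\tim+1}(i)$ of the statement.

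For the binomial double sum $(1+\pdel)\sum_{k=1}^{i}\sum_{j\ge k-1}\baf_\tim(j)\binom{j}{k-1}\p^{k-1}(1-\p)^{j-k+1}$, I would interchange the two sums and set $m=k-1$; the inner sum then becomes the partial binomial distribution function $G(j,i-1):=\sum_{m=0}^{i-1}\binom{j}{m}\p^m(1-\p)^{j-m}$, which equals $1$ whenever $j\le i-1$. A summation by parts in $j$, again via $\baf_\tim(j)=\tim(h_\tim(j)-h_\tim(j-1))$, converts $\sum_j\baf_\tim(j)G(j,i-1)$ into $\tim\sum_j h_\tim(j)\big(G(j,i-1)-G(j+1,i-1)\big)=\tim\sum_j h_\tim(j)\,F(j,i-1,\p)$; the boundary term at infinity vanishes because $G(j,i-1)\to0$ while $h_\tim$ stays bounded, and the sum truncates to $j\ge i-1$ since $F(j,i-1,\p)=0$ for $j\le i-2$. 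After the factor $\tim^{-1}$ cancels the $\tim$ produced by the summation by parts, division by $\tim+1$ yields the last term $\frac{1+\pdel}{\tim+1}\sum_{j\ge i-1}h_\tim(j)F(j,i-1,\p)$ of the lemma.

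The main obstacle is not conceptual but careful bookkeeping: the index shifts in the three local sums must be tracked precisely, the interior indices (where $-c_k+b_{k+1}+g_{k-1}=-\pdel$) must be separated cleanly from the boundary indices $k=1,i,i+1$ (here one should also check whether the protection rule that the neighbour of a degree-one vertex is never deleted slightly alters the recursion at $k=1,2$), and in the summation-by-parts step both the decay $G(\cdot,i-1)\to0$ and the truncation $F(j,i-1,\p)=0$ for $j<i-1$ must be invoked. No new estimate is needed beyond the two defining identities for $h_\tim$ and routine manipulation of partial binomial sums, but the computation has to be carried out carefully to land on exactly the stated coefficients $D_{\tim+1},B_{\tim+1},C_{\tim+1}$ and the function $F$.
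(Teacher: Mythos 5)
Your proposal is correct, and it reaches the lemma by a route that is organized differently from the paper's. The paper proves (\ref{lemm6}) by induction on $i$: it establishes the $i=1$ case from the degree-one recursion (\ref{eq61}), then in the inductive step writes $h_{\tim+1}(k+1)=h_{\tim+1}(k)+\frac{1}{\tim+1}\E\{\degree_{\tim+1}(k+1)\}$, substitutes the induction hypothesis, and watches the cross terms cancel. You instead sum the one-step recursion for $\baf_{\tim+1}(k)$ directly over $k=1,\dots,i$ and isolate the cancellation as a single closed-form identity, $-c_k+b_{k+1}+g_{k-1}=-\pdel$ for the interior indices, which indeed checks out with $c_k=\pdel(k+1)+(1+\pdel)(\p k+1)$, $b_k=(1+\pdel)(1+\p(k-1))$, $g_k=\pdel(k+1)$; the boundary contributions at $k=i$ and $k=i+1$ then produce exactly $D_{\tim+1}(i)$, $B_{\tim+1}(i)$, $C_{\tim+1}(i)$, and your Abel-summation treatment of the binomial double sum (interchange, recognize the partial binomial CDF $G(j,i-1)$, difference it to get $F(j,i-1,\p)$, truncate to $j\ge i-1$) is the same device the paper uses in (\ref{eq63}) and (\ref{eq67}). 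In effect your argument is the paper's induction ``unrolled''; what it buys is that the cancellation mechanism is visible in one line rather than being distributed across the inductive step. One point you flagged but left open does need to be settled, and it settles favorably: the protection rule for neighbours of degree-one vertices modifies the recursion at $k=1$ (the paper's (\ref{eq61}) has $c_1'=(1+\pdel)(1+\p)+\pdel$ rather than the generic $c_1=2\pdel+(1+\pdel)(1+\p)$), and this shift by exactly $\pdel$ is precisely what makes the net coefficient at $k=1$ equal to $-\pdel$ as well, so the interior coefficient is uniform over $1\le k\le i-1$ and your telescoping goes through; without that correction the $k=1$ coefficient would be $-2\pdel$ and the stated $D,B,C$ would not emerge. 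With that verification included, your plan yields a complete and correct proof.
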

This lemma can be proved by induction. The complete proof can be found in
Appendix~\ref{A7}. The recursive equation presented in Lemma~\ref{lem6}
is used later to prove that the degree distribution converges to a power
law.

\begin{Lemma}
Let $s_i = \sum_{k=1}^ia_i$ and  \begin{equation}\label{eq60} \omega(\tim)
= \sup_{i \geq 1}\frac{h_\tim(i)}{s_i}, \end{equation} where
$h_\tim(i)$ satisfies (\ref{lemm6}). Then the limit
$\lim_{\tim\rightarrow \infty}\omega(\tim)$ exists and we have
$\lim_{\tim\rightarrow \infty}\omega(\tim) = 1$.
\end{Lemma}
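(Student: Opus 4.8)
The plan is to read off from the recursion (\ref{lemm6}) that $h_{\tim+1}(i)$ is a \emph{convex combination} of the values $\{h_\tim(k):k\ge i-1\}$ for which the profile $s=(s_i)$ is the exact fixed point, deduce that $\omega(\tim)$ is essentially non-increasing (so its limit exists), and then use the ``excess mass'' forced by $\sum_k \degree_\tim(k)=N_\tim=\tim$ to rule out a limit larger than $1$. As preliminaries, note that $h_\tim(i)=\frac1\tim\sum_{k=1}^i\E\{\degree_\tim(k)\}$ is non-decreasing in $i$ with $0\le h_\tim(i)\le 1$ and $h_\tim(i)\uparrow 1$ as $i\to\infty$, while $s_i$ is non-decreasing with $s_i\uparrow\sum_k a_k=1$ and $s_1=a_1>0$; hence $\omega(\tim)=\sup_i h_\tim(i)/s_i$ is finite, and since $h_\tim(i)/s_i\to 1$ as $i\to\infty$ we automatically have $\omega(\tim)\ge 1$ for every $\tim$.

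\emph{Step 1: the fixed-point identity.} Write (\ref{lemm6}) as $h_{\tim+1}(i)=\sum_k w_{\tim,i,k}\,h_\tim(k)$, the weights being $D_{\tim+1}(i),B_{\tim+1}(i),C_{\tim+1}(i)$ on $h_\tim(i),h_\tim(i-1),h_\tim(i+1)$ and $\frac{1+\pdel}{\tim+1}F(j,i-1,\p)$ on $h_\tim(j)$, $j\ge i-1$. Two facts are needed: (a) $\sum_k w_{\tim,i,k}=1$, which uses $D_{\tim+1}(i)+B_{\tim+1}(i)+C_{\tim+1}(i)=\frac{\tim-\pdel}{\tim+1}$ together with the telescoping identity $\sum_{j\ge i-1}F(j,i-1,\p)=1$ (each $F(j,i-1,\p)$ is a difference of consecutive binomial CDFs evaluated at $i-1$); and (b) $\sum_k w_{\tim,i,k}s_k=s_i$. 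For (b), multiplying through by $\tim+1$ cancels the $\tim\,s_i$ coming from $D_{\tim+1}(i)s_i$ against part of $(\tim+1)s_i$ and leaves an $\tim$-free identity, which follows from the defining recursion (\ref{eq4}) for the $a_i$ by summation, using $s_i-s_{i-1}=a_i$, the telescoping structure of $F$, and the normalization $\sum_k a_k=1$; this verification is the main routine computation. Finally, $w_{\tim,i,k}\ge 0$ for all $k$ as soon as $1\le i\le c\tim$ for a suitable constant $c>0$ (this is exactly the condition $D_{\tim+1}(i)\ge 0$).

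\emph{Steps 2--3: convergence to $1$.} For $i\le c\tim$, combining (a)--(b) makes $h_{\tim+1}(i)/s_i=\sum_k(w_{\tim,i,k}s_k/s_i)(h_\tim(k)/s_k)$ a convex combination of ratios each $\le\omega(\tim)$, so $h_{\tim+1}(i)/s_i\le\omega(\tim)$; for $i>c\tim$ the crude bound $h_{\tim+1}(i)/s_i\le 1/s_i\le 1/s_{\lceil c\tim\rceil}=:1+\eta_\tim$ holds with $\eta_\tim\to 0$. Hence $\omega(\tim+1)\le\max\{\omega(\tim),1+\eta_\tim\}$, and the auxiliary sequence $c_\tim=\max\{\omega(\tim),1+\sup_{m\ge\tim}\eta_m\}$ is non-increasing and bounded below by $1$, so it converges and with it $\omega(\tim)\to L\ge 1$. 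Suppose $L>1$. Because $h_\tim(j)\le 1$ while $s_j\uparrow 1$ and $\omega(\tim)\to L>1$, there are $J_0$ and $c_0>0$ with $\omega(\tim)s_j-h_\tim(j)\ge c_0$ for all $j\ge J_0$ and all large $\tim$. For any $i\le c\tim$, substituting $h_\tim(k)\le\omega(\tim)s_k$ in every term of $\sum_k w_{\tim,i,k}h_\tim(k)$ but retaining the slack $c_0$ on the tail indices $j\ge\max\{J_0,i-1\}$ of the $F$-sum gives $h_{\tim+1}(i)\le\omega(\tim)s_i-\frac{(1+\pdel)c_0}{\tim+1}\sum_{j\ge\max\{J_0,i-1\}}F(j,i-1,\p)$, and $\sum_{j\ge\max\{J_0,i-1\}}F(j,i-1,\p)\ge(1-\p)^{J_0}>0$ uniformly in $i$; hence $h_{\tim+1}(i)/s_i\le\omega(\tim)-\kappa/(\tim+1)$ for a fixed $\kappa>0$. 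With the bound $1+\eta_\tim$ on $i>c\tim$ this yields $\omega(\tim+1)\le\max\{\omega(\tim)-\kappa/(\tim+1),\,1+\eta_\tim\}$, and since $\omega(\tim)\to L>1>\lim(1+\eta_\tim)$ we get $\omega(\tim+1)\le\omega(\tim)-\kappa/(\tim+1)$ for all large $\tim$; summing the divergent series $\sum_\tim 1/\tim$ contradicts $\omega(\tim)\ge 1$. Therefore $L=1$.

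\emph{Main obstacle.} The conceptual crux is Step~3: extracting a genuine $\Theta(1/\tim)$-per-step decrease of $\omega(\tim)$ from the ``excess mass'' $1-h_\tim(j)$ that must accumulate at large degrees (a consequence of $\sum_k\degree_\tim(k)=\tim$) together with the strictly positive binomial-tail weights $F(j,\cdot,\p)$, and making this decrease uniform over the maximizing index. A secondary technicality is the range $i=\Theta(\tim)$, where $D_{\tim+1}(i)<0$ and the convex-combination bound fails; this is absorbed by the truncation at $i=c\tim$ and the fact that $s_{c\tim}\to1$. Steps~1--3 follow the scheme of \cite{complex,duplication}.
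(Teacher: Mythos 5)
Your proposal is correct in outline and follows essentially the same scheme as the paper's (very brief) sketch, which itself defers to \cite{complex}: lower-bound $\omega(\tim)$ by $1$, use the stochastic-matrix/fixed-point structure of (\ref{lemm6}) to get (near-)monotonicity of $\omega$, and rule out a limit $L>1$ by extracting a $\Theta(1/\tim)$ per-step decrease from the positive binomial-tail weights. In fact you supply strictly more detail than the paper — in particular your truncation at $i=c\tim$ repairs the range where $D_{\tim+1}(i)<0$ and the paper's claimed exact inequality $\omega(\tim+1)\le\omega(\tim)$ is not directly justified — with the one deferred item (the fixed-point identity $\sum_k w_{\tim,i,k}s_k=s_i$, equivalent to consistency of (\ref{lemm6}) with the $a$-recursion (\ref{eq4})) being exactly the computation the paper also omits.
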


\paragraph*{Sketch of the proof} Knowing that $h_\tim(i)$ satisfies the
recurrence formula (\ref{lemm6}), the proof is similar to \cite{complex}.
Plugging $i=\tim$ in (\ref{eq60}) yields $\omega(\tim) \geq
\frac{h_\tim(\tim)}{s_\tim} \geq \frac{1}{s_\tim} \geq 1$. Using the
Lemma~\ref{lem6} and similar to \cite{complex}, it can be shown that
$\omega(\tim+1) \leq \omega(\tim)$. $\omega(\tim)$ is bounded and
decreasing, so the limit of $\lim_{\tim\rightarrow\infty}
\omega(\tim)$ exists. To show  $\lim_{\tim\rightarrow \infty}\omega(\tim)
= 1$, we assume that $\lim_{\tim\rightarrow\infty} \omega(\tim) = c$. It
can be shown that if $c \neq 1 $, $\omega(\tim) \leq 1 $ is violated.
Thus $c = 1$ and the proof is complete.

\end{proof}
\subsection{Proof of Lemma \ref{lem1}}\label{ap:lem1}
\begin{proof}
\begin{align}
\sum_{j\geq i}\frac{a_j}{a_i}{j\choose i}\p^i(1-\p)^{j-i} & = \sum_{j\geq
i}(\frac{i}{j})^{\beta}{j\choose i}\p^i(1-\p)^{j-i}\nonumber\\
&=  \sum_{j\geq i}(\frac{i}{j})^{\beta}{j\choose j-i}\p^i(1-\p)^{j-
i}\nonumber\\
&=  \left(1+O(\frac{1}{i})\right)\sum_{j\geq i}{j - \beta\choose j-
i}\p^i(1-\p)^{j-i}\nonumber\\
&=  \left(1+O(\frac{1}{i})\right)\p^i\sum_{k=0}{k+i - \beta\choose k}(1-
\p)^{k}\nonumber\\
&=  \left(1+O(\frac{1}{i})\right)\p^i\sum_{k=0}{\beta-i-1\choose k}(-
1)^k(1-\p)^{k}\nonumber\\
&=  \left(1+O(\frac{1}{i})\right)\p^ip^{\beta-i-1} =
\left(1+O(\frac{1}{i})\right)\p^{\beta-1}.
\end{align}
\end{proof}

\subsection{Proof of Lemma \ref{lem6}}\label{A7}
We prove the lemma by induction on $i$:
\paragraph*{For $i=1$} It is sufficient to show that: \\$h(\tim+1,1) =
D_{\tim+1}(1)h(\tim,1) + C_{\tim+1}(1)h(\tim,2) +
\frac{1}{\tim+1}\sum_{j\geq1}h(\tim,j)F(j,0,\p)$. Also using the
definition of $F(j,i,\p)$, we can rewrite $F(j,0,\p)$ as $(1-\p)^j-(1-
\p)^{j+1}$.
The number of nodes with degree one at time $\tim+1$ can be written as
following
\begin{align}
\label{eq61}
\mathbf{E}\{f(\tim+1,1)\} =& \left(1 -
\frac{(1+\pdel)(1+\p)+\pdel}{\tim}\right)\mathbf{E}\{f_\tim(1)\} +
\frac{2\pdel}{\tim}\mathbf{E}\{f_\tim(2)\} \nonumber\\&+ (1+\pdel)\sum_{j
\geq 1}\frac{1}{\tim}\mathbf{E}\{f_\tim(j)\}(1-\p)^j.
\end{align}
Note that (\ref{eq61}) is slightly different from the general equation
for each $i$, (\ref{eq3}). Because as described in Sec.\ref{sec:intro},
neighbors of a node with degree one cannot be eliminated from the graph
to maintain the connectivity in the graph. Therefore, a node with degree
one can change in the deletion step if that node is selected in the
deletion step (with probability $\pdel$). Using (\ref{eq61}),
$h(\tim+1,1)$ can be written as
\begin{align}
\label{eq62}
h(\tim+1,1) &= \frac{1}{{\tim+1}}\mathbf{E}\{f(\tim+1,1)\}\nonumber\\
&= \frac{1}{{\tim+1}}\left( \left(1 -
\frac{(1+\pdel)(1+\p)+\pdel}{\tim}\right)\mathbf{E}\{f_\tim(1)\} +
\frac{2\pdel}{\tim}\mathbf{E}\{f_\tim(2)\}\right)\nonumber\\
&\ +\frac{1}
{\tim+1}\sum_{j \geq 1}\frac{1+\pdel}{\tim}\mathbf{E}\{f_\tim(j)\}(1-\p)^j.
\end{align}

We know that $h(\tim,0) = 0$ for all $\tim$. Using the definition of
$h(\cdot,\cdot)$ and (\ref{eq61}), (\ref{eq62}) can be re-arranged as
follows
\begin{align}
\label{eq622}
h(\tim+1,1) =& \frac{1}{{\tim+1}}\bigg( \Big(\tim -
\big((1+\pdel)(1+\p)+\pdel\big)\Big)h(\tim,1) +
\frac{2\pdel}{\tim}\big(h(\tim,2)-h(\tim,1)\big)\nonumber\\ &+
(1+\pdel)\sum_{j \geq 1}(h(\tim,j) - h(\tim,j-1))(1-
\p)^j\bigg)\nonumber\\
 =&\frac{1}{{\tim+1}}\left( \Big(\tim - \big(3\pdel
+(1+\pdel)(1+\p)\big)\Big)h(\tim,1) + \frac{2\pdel}{\tim}h(\tim,2)
\right)\nonumber\\&+\frac{1+\pdel}{\tim +1}\sum_{j \geq 1}(h(\tim,j) -
h(\tim,j-1))(1-\p)^j
\end{align}
$\sum_{j \geq 1}(h(\tim,j) - h(\tim,j-1))(1-\p)^j$ can be written in
terms of the $F(j,i,\p)$.
\begin{align}
\label{eq63}
\sum_{j \geq 1}(h(\tim,j) - h(\tim,j-1))(1-\p)^j &=\sum_{j \geq
1}h(\tim,j)(1-\p)^j - \sum_{j \geq 1}(h(\tim,j-1)(1-\p)^j\nonumber\\
 &=\sum_{j \geq 1}h(\tim,j)(1-\p)^j - \sum_{j \geq 1}(h(\tim,j)(1-
\p)^j+1\nonumber\\
 &=\sum_{j \geq 1}h(\tim,j)\left((1-\p)^j - (1-\p)^{j+1}\right)
\nonumber\\
 &= \sum_{j\geq1}h(\tim,j)F(j,0,\p).
\end{align}
Substituting (\ref{eq63}) in (\ref{eq622}) yields
\begin{align}
\label{eq64}
h(\tim+1,1) &= \frac{1}{{\tim+1}}\left( \Big(\tim -
\big((1+\pdel)(1+\p)+3\pdel\big)\Big)h(\tim,1) +
\frac{2\pdel}{\tim}h(\tim,2)
+(1+\pdel)\sum_{j\geq1}h(\tim,j)F(j,0,\p)\right)\nonumber\\
&= D_{\tim+1}(1)h(\tim,1) + C_{\tim+1}(1)h(\tim,2) +
\frac{1+\pdel}{{\tim+1}}\sum_{j\geq1}h(\tim,j)F(j,0,\p).
\end{align}
Thus (\ref{lemm6}) holds for $i=1$. Now it is assumed that (\ref{lemm6})
holds for $i = k$, we want to show that it also holds for $i = k+1$.
\begin{align}
\label{eq65}
\mathbf{E}\{f(\tim+1,k+1)\} =&\left(1 -
\frac{\pdel(k+2)+(1+\pdel)\big(\p(k+1)+1\big)}{\tim}\right)\mathbf{E}\{f(
\tim,k+1)\}\nonumber\\&+\left(\frac{(1+\pdel)(1+pk)}{\tim}\right)\mathbf{
E}\{f_\tim(k)\}
+\left(\frac{\pdel(k+2)}{\tim}\right)\mathbf{E}\{f_\tim(k+2)\}\nonumber\\
&+ (1+\pdel)\sum_{j \leq k}\frac{f_\tim(j)}{\tim}{j \choose k} \p^k(1 -
\p)^{j-k}.
\end{align}
from definition of $h(\tim,k)$, we have : $\mathbf{E}\{f_\tim(k)\} =
\tim\left(h(\tim,k) - h(\tim,k-1)\right)$. Eq. (\ref{eq65}) can be re-
written as follows
\begin{align}
\label{eq66}
\mathbf{E}\{f(\tim+1,k+1)\} =&\left(\tim -
\Big(\pdel(k+2)+(1+\pdel)\big(\p(k+1)+1\big)\Big)\right)\big(h(\tim,k+1)
- h(\tim,k)\big)\nonumber\\
&+(1+\pdel)(1+pk)\big(h(\tim,k) - h(\tim,k-
1)\big)+\pdel(k+2)\big(h(\tim,k+2) -
h(\tim,k+1)\big)\nonumber\\&+(1+\pdel) \sum_{j \leq k}\big(h(\tim,j) -
h(\tim,j-1)\big){j \choose k} \p^k(1 - \p)^{j-k}.
\end{align}
Using the Abel summation identity, and knowing that
$$F(j,k,\p)=\sum_{k=0}^{k}{j\choose k}\p^k(1-\p)^{j-k} -
\sum_{k=0}^{k}{j+1\choose k}\p^k(1-\p)^{j+1-k},$$ the last term can be
written as
\begin{align}\label{eq67}
& \sum_{j \leq k}\big(h(\tim,j) - h(\tim,j-1)\big){j \choose k} \p^k(1 -
\p)^{j-k} \\
& \quad =\sum_{j \geq k}\left({j \choose k} \p^k(1 - \p)^{j-k} - {j+1
\choose k} \p^k(1 - \p)^{j+1-k}\right)
-\p^kh(\tim,k-
1)\nonumber\\
 &\quad = -\p^kh(\tim,k-1)+\sum_{j \geq k}h(\tim,j)\big(F(j,k,\p)-
F(j,k-1,\p)\big).
\end{align}
Substituting (\ref{eq67}) in (\ref{eq66}) yields
\begin{align}
\label{eq68}
\mathbf{E}\{f(\tim+1,k+1)\} =& h(\tim,k+2)(\pdel(k+2)) +
h(\tim,k+1)\bigg(\tim -
\big(2\pdel(k+2)+(1+\pdel)(\p(k+1)+1)\big)\bigg)\nonumber\\
&+h(\tim,k)\Big((1+\pdel)\big(2 + \p(2k+1)\big)\nonumber\\
& + \pdel(k+2) - \tim\Big)
+ h(\tim,k-1)(1+\pdel)(-1-pk-\p^k)\nonumber\\
&+(1+\pdel)\sum_{j \geq k}h(\tim,j)\big(F(j,k,\p)- F(j,k-1,\p)\big).
\end{align}
The value of $h(\tim+1,k+1)$ can be computed using $h(\tim,k+1)$ and
$\mathbf{E}\{f_\tim(k+1)\}$ as follows
\begin{equation}
\label{eq69}
h(\tim+1,k+1) = h(\tim+1,k) + \frac{1}{\tim+1}\mathbf{E}\{f(\tim+1,k+1)\}.
\end{equation}
Eq.(\ref{eq68}) gives an expression for $\mathbf{E}\{f(\tim+1,k+1)\}$ in
terms of the value of $h(\cdot,\cdot)$ at time $\tim$. Substituting
(\ref{eq68}) in (\ref{eq69}) gives a recursive equation for computing
$h(\tim+1,k+1)$:
\begin{align}
\label{eq70}
h(\tim+1,k+1) =& h(\tim+1,k) +
\frac{1}{{\tim+1}}\mathbf{E}\{f(\tim+1,k+1)\}\nonumber\\
=& D_{\tim+1}(k)h(\tim,k) + B_{\tim+1}(k)h(\tim,k-1) +
C_{\tim+1}h(\tim,k+1)\nonumber\\
&+ \frac{1+\pdel}{{\tim+1}}\sum_{j \geq k-
1}h(\tim,j)F(j,k-1,\p)\nonumber\\
&+\frac{1}{\tim+1}\Bigg( h(\tim,k+2)(\pdel(k+2)) + h(\tim,k+1) \nonumber\\
& \bigg(\tim
-\big(2\pdel(k+2)+(1+\pdel)(\p(k+1)+1)\big)\bigg)\nonumber\\
&+h(\tim,k)\Big((1+\pdel)\big(2 + \p(2k+1)\big) + h(\tim,k-1)(1+\pdel)(-
1-pk-\p^k)\nonumber\\
&+(1+\pdel)\sum_{j \geq k}h(\tim,j)\big(F(j,k,\p)- F(j,k-1,\p)\big)\Bigg).
\end{align}
We assume that (\ref{eq3}) holds for $i = k$ so substituting the values
for $D_{\tim+1}(k)$, $B_{\tim+1}(k)$, and $C_{\tim+1}(k)$ from (\ref{eq3})
in (\ref{eq70}) yields
\begin{align}
\label{eq71}
h(\tim+1,k+1) =& h(\tim,k+2)\left(\frac{\pdel(k+2)}{\tim+1}\right) +
h(\tim,k+1)\left(\frac{\tim -
\Big(\pdel(k+3)+(1+\pdel)\big(\p(k+1)+1\big)\Big)}{{\tim+1}}\right)
\nonumber\\
&+h(\tim,k)\left(\frac{(1+\pdel)\big(1 +
\p(k+1)\big)}{\tim+1}\right)+\frac{1+\pdel}{\tim+1} \sum_{j \geq
k}h(\tim,j)\big(F(j,k,\p)\big).
\end{align}
(\ref{eq71})can be written as follows
\begin{align}
h(\tim+1,k+1) =& D_{\tim+1}(k+1)h(\tim,k+1) + B_{\tim+1}(k+1)h(\tim,k) +
C_{\tim+1}(k+1)h(\tim,k+2) \nonumber\\&+ \frac{1+\pdel}{{\tim+1}}\sum_{j
\geq k}h(\tim,j)F(j,k,\p).
\end{align}
Thus, (\ref{eq3}) holds for $i = k+1$ and the proof is completed by
induction.

 \subsection{Proof of Theorem \ref{theo2}}\label{ap:bound}
 \begin{proof}
 Define the Liapunov function $V(x) = (x'
x)/2$ for  $x \in \mathbb{R}^\s$.
Use $\Et$ to denote the conditional expectation
with respect to the $\sigma$-algebra, $\mathcal{H}_n$, generated by
$\{\obs_j,\mc_j, \quad j\leq \tim\}$.

\begin{align}
\label{eq17}
\Et\{V(\tg_{\tim+1})-V(\tg_\tim)\} =& \Et\Big\{\tg'_\tim[-
\esa\tg_\tim+\esa\left({\obs_{\tim+1}}-\mathbf{E}\{\bg(\mc_\tim)\}\right)
+\mathbf{E}\{\bg(\mc_\tim) - \bg(\mc_{\tim+1})\}]\Big\}\nonumber\\
&+\Et\Big\{|-\esa\tg_\tim+\esa\left(\obs_{\tim+1}-
\mathbf{E}\{\bg(\mc_\tim)\}\right)+\mathbf{E}\{\bg(\mc_\tim) -
\bg(\mc_{\tim+1})\}|^2\Big\},
\end{align}
where $\obs_{\tim+1}$ and $\bg(\mc_\tim)$ are  vectors in
$\mathbb{R}^\s$ with elements $\obs_\tim(i)$ and $\bg(\mc_\tim,i)$,\quad
$1 \leq i \leq \s$, respectively.
It is easily seen that
\beq
\Et\{\bg(\mc_\tim) - \bg(\mc_{\tim+1})\} = O(\emc),
\ 
\Et\{|\bg(\mc_\tim) - \bg(\mc_{\tim+1})|^2\} = O(\emc).
\eeq

Using $K$ to denote a generic positive value
(with the notation $KK=K$  and $K+K =K$),
a farmiliar inequality $ab \leq \frac{a^2+b^2}{2}$ yields
\beq\label{eq:elem}  O(\esa\emc) = O(\esa^2 + \emc^2).\eeq  Moreover
 we have $|\tg_\tim| = |\tg_\tim| \cdot 1 \leq (|\tg_\tim|^2
+1 )/2$. Thus \beq\label{eq:pro} O(\emc)|\tg_\tim|\leq
O(\emc)\left(V(\tg_\tim) + 1\right). \eeq
Then detailed estimates lead to
\begin{align}
\label{eq21}
\Et\Big\{\Big|-\esa\tg_\tim&+\esa\left(\obs_{\tim+1}-
\mathbf{E}\{\bg(\mc_\tim)\}\right)+\mathbf{E}\{\bg(\mc_\tim) -
\bg(\mc_{\tim+1})\}\Big|^2\Big\}
= O(\esa^2+\emc^2)(V(\tg_\tim)+1)
\end{align}
Furthermore, wee obtain that
\begin{align}
\label{eq23}
\Et\{V(\tg_{\tim+1})-V(\tg_\tim)\} =&
-2\esa V(\tg_\tim) + \esa \Et\{\tg'_\tim[\obs_{\tim+1} -
\mathbf{E}\bg(\mc_\tim)]\} \nonumber\\&+
\Et\{\tg'_\tim\mathbf{E}[\bg(\mc_{\tim+1}) - \bg(\mc_\tim)]\}
+ O(\esa^2 +\emc^2)(V(\tg_\tim)+1).
\end{align}

Define $V^{\emc}_1$ and $V^{\emc}_2$ as following
\begin{align}
V^{\emc}_1(\tg,\tim) &= \esa\sum_{j=\tim}^{\infty}\tg'\Et\{\obs_{j+1} -
\mathbf{E}\bg(\mc_j)\},\nonumber\\
V^{\emc}_2(\tg,\tim) &= \sum_{j=\tim}^{\infty}\tg'\mathbf{E}_n\{\bg(\mc_j)-
\bg(\mc_{j+1})\},
\end{align}
It can be shown that
\bq{eq:order-vp} \barray
\ad |V^\rho_1(\tilde g, n)| = O(\e) (V(\tilde g)+ 1),\\
\ad |V^\rho_2(\tilde g, n)| = O(\rho) (V(\tilde g)+ 1).\earray\eq
Define $ W(\tg,\tim)$ as
\begin{equation}
 W(\tg,\tim) = V(\tg) +  V^{\emc}_1(\tg,\tim)+V^{\emc}_2(\tg,\tim).
 \end{equation}
This leads to
\begin{align}
\label{eq27}
\Et\{W(\tg_{\tim+1},\tim+1) - W(\tg_\tim,\tim)\} =&\Et\{V(\tg_{\tim+1})-
V(\tg_\tim)\} + \Et\{V^{\emc}_1(\tg_{\tim+1},\tim+1)-
V^{\emc}_1(\tg_\tim,\tim)\}\nonumber\\&+\Et\{V^{\emc}_2(\tg_{\tim+1},\tim
+1)- V^{\emc}_2(\tg_\tim,\tim)\}.
\end{align}
Moreover,
\begin{align}
\label{eq33}
\Et\{W(\tg_{\tim+1},\tim+1) - W(\tg_\tim,\tim)\} =-2\esa V(\tg_\tim)+
O(\esa^2 +\emc^2)(V(\tg_\tim)+1).
\end{align}
Eq. (\ref{eq33}) can be rewritten as
\begin{align}
\label{eq30}
\Et\{&W(\tg_{\tim+1},\tim+1) - W(\tg_\tim,\tim)\} \leq -2\esa
W(\tg_\tim,\tim)+ O(\esa^2 +\emc^2)(W(\tg_\tim,\tim)+1).
\end{align}
If $\esa$ and $\emc$ are chosen small enough, then there exists an small
$\lambda$ such that $-2\esa + O(\emc^2)+ O(\esa^2) \leq -\lambda\esa$. So
(\ref{eq30}) can be re-arranged to the following,
\begin{align}
\Et\{W(\tg_{\tim+1},\tim+1)\leq (1-\lambda\esa) W(\tg_\tim,\tim)+
O(\esa^2 +\emc^2).
\end{align}
Taking expectation of both sides yields
\begin{align}
\label{eq32}
\mathbf{E}\{W(\tg_{\tim+1},\tim+1)\}\leq
(1-\lambda\esa) \mathbf{E}\{W(\tg_\tim,\tim)\}+O(\esa^2 +\emc^2).
\end{align}
Iterating on (\ref{eq32}) yields
\begin{align}
\mathbf{E}\{W(\tg_{\tim+1},\tim+1)\}\leq
(1-\lambda\esa)^{\tim-N_\emc}\mathbf{E}\{W(\tg_{N_\emc},N_\emc)\}+\sum_{j
= N_\emc}^{\tim}O(\esa^2 +\emc^2)(1-\lambda\esa)^{j-N_\emc},
\end{align}
so
\begin{align}
\mathbf{E}\{W(\tg_{\tim+1},\tim+1)\}\leq
(1-\lambda\esa)^{\tim-
N_\emc}\mathbf{E}\{W(\tg_{N_\emc},N_\emc)\}+O\left(\esa
+\frac{\emc^2}{\esa}\right).
\end{align}
If $\tim$ is large enough we can approximate $(1-\lambda\esa)^{\tim-
N_\emc} = O(\esa)$
\begin{equation}
\mathbf{E}\{W(\tg_{\tim+1},\tim+1)\}\leq
O\left(\esa+\frac{\emc^2}{\esa}\right)
\end{equation}
Finally, using (\ref{eq:order-vp}) and replacing
$W(\tg_{\tim+1},\tim+1)$ with $V(\tg_{\tim+1})$, we obtain
\begin{equation}
\mathbf{E}\{V(\tg_{\tim+1})\}\leq
O\left(\emc+\esa+\frac{\emc^2}{\esa}\right).
\end{equation}
\end{proof}

\subsection{Sketch of the Proof of Theorem \ref{theo3}}\label{ap:conv-pf}
\noindent
Since the proof is similar to \cite[Theorem 4.5]{YKI04}, we only
indicate the main steps needed and omit most of the vabatim details.

(1) First we show that the two component process $(\hat
g^\e\cd,\theta^\e\cd)$ is tight in $D([0,T]:
\rr^{N_0}\times \M)$.
Using the techniques as in \cite[Theorem 4.3]{YinZ05}, it can be shown that
$\theta^\e\cd$ converges weakly to a continuous-time Markov chain
generated by $Q$. Thus, we mainly need to consider $\hat g^\e\cd$.
We show that $$\lim_{\Delta\to 0}\limsup_{\e\to 0} \bE [\sup_{0\le s \le
\Delta} \bE^\e_t | \hat
g^\e(t+s)-  \hat g^\e(t)|^2] =0,$$
where $\bE^\e_t$ denotes the conditioning on the past information up to $t$.
Then the tightness follows from the criterion \cite[p. 47]{Kushner84}.

(2) Since $(\hat g^\e\cd,\theta^\e\cd)$ is tight, we can extract
weakly convergent subsequence according to the Prohorov theorem (see
\cite{KY03}).
To figure out the limit, we show that $(\hat g^\e\cd, \theta^\e\cd)$ is a
solution of the martingale problem with operator
$L_0$. For each $i\in \M$ and
continuously differential function
with compact support $f(\cdot, i)$, the operator is given by
\bq{l1-def}
L_0 f(\hat g,i)= \nabla f'(\hat g,i) [- \hat g
 +\bar g(i)] + \sum_{j\in \M} q_{ij}
f(\hat g,j), \ i\in \M.\eq
We can further demonstrate the martingale problem with operator $L_0$
has a unique solution in the sense in distribution. Thus the desired
convergence property follows.

\subsection{Sketch of the Proof of Theorem \ref{theo4}}\label{ap:sde-pf}
(1) First note
\bq{nu-defn}\nu_{n+1}=\nu_n - \e \nu_n +\sqrt \e (y_{n+1}-\bE \bar
g(\theta_n) )
+ { \bE [\bar g(\theta_n)- \bar g(\theta_{n+1}]\over \sqrt \e }.\eq
Again, the approach is similar to that of
\cite[Theorem 5.6]{YKI04}. So again, we will be brief.

(2) Define an operator
 \bq{op-def-sde-0}
 {\cal L} f(\nu,i)=-\nabla f'(\nu,i) \nu  + {1\over 2}
 \tr [\nabla ^2
 f(\nu,i)
 \Sigma(i) ] +\sum_{j\in \M} q_{ij}
f(\nu,j), \ i\in \M,\eq
 for function $f(\cdot,i)$ that has continuous partial derivatives
 with respect to $\nu$ up to the second order and that has compact
 support. It can be show that the associated martingale problem has
 a unique solution in the sense in distribution.

(3) It is natural now to work with a truncated process.
For a fixed but otherwise arbitrary  $r_1>0$,
 define a truncation function
$$q^{r_1}(x)=\left\{ \barray 1, & \hbox{ if } x \in S^{r_1},\\
0, & \hbox{ if } x \in \rr^{N_0}- S^{r_1},\earray \right.$$
where $S^{r_1}= \{ x \in \rr^{N_0}: |x| \le r_1\}$.
Then we get the truncated iterates
\bq{trun-it}
\nu^{r_1}_{n+1}=\nu^{r_1}_n - \e \nu^{r_1}_n q^{r_1}(\nu^{r_1}_n)
 +\sqrt \e (y_{n+1}-\bE \bar g(\theta_n) )
 + { \bE [\bar g(\theta_n)- \bar g(\theta_{n+1}]\over \sqrt \e }q^{r_1}(\nu^{r_1}_n).\eq
 Define $\nu^{\e,r_1}(t)= \nu^{r_1}_n$ for $t\in [\e n, \e n+\e)$. Then
 $\nu^{\e,r_1}\cd$ is an $r$-truncation of
 $\nu^\e\cd$; see \cite[p. 284]{KY03} for a definition.
 We then show the truncated process $(\nu^{\e,r_1}\cd, \theta^\e\cd)$
 is
 tight. Moreover, by Prohorov's theorem, we can extract a convergent
 subsequence with limit $(\nu^{r_1}\cd, \theta\cd)$
 such that
 the limit $(\nu^{r_1}\cd,\theta\cd)$ is the solution of the martingale
 problem with operator
 ${\cal L}^{r_1}$ defined by
 \bq{op-def-sde}
 {\cal L}^{r_1} f^{r_1}(\nu,i)=-\nabla f^{r_1,\prime}(\nu,i) \nu  + {1\over 2}
 \tr [\nabla ^2
 f^{r_1}(\nu,i)
 \Sigma(i) ]+\sum_{j\in \M} q_{ij}
f^{r_1}(\nu,j), \ i\in \M,\eq
 where $f^{r_1}(\nu,i)= f(\nu,i) q^{r_1}(\nu)$.

(4) Letting $r_1\to \infty$, we show that the un-truncated process also
converges and the limit denoted by $(\nu\cd,\theta\cd)$ is precisely
the martingale problem with operator ${\cal L}$ defined in
(\ref{op-def-sde-0}). Furthermore, the limit covariance can be
evaluated as in \cite[Lemma 5.2]{YKI04}.

\bibliographystyle{IEEEtran}
\bibliography{reff}

\end{document}